\tikzstyle{vertex}=[circle, draw, inner sep=0pt, minimum size=6pt]
\newtheorem{theorem}{Theorem}
\newtheorem{theorem*}{Theorem}
\newtheorem{corollary}[theorem]{Corollary}
\newtheorem{lemma}[theorem]{Lemma}
\newtheorem{observation}[theorem]{Observation}
\newtheorem{proposition}[theorem]{Proposition}
\newtheorem{fact}[theorem]{Fact}
\theoremstyle{definition}
\newtheorem{example}[theorem]{Example}
\renewcommand{\vec}[1]{\mathbf{#1}}
\newcommand{\SL}{\mathrm{SL}}
\newcommand{\F}{\mathbb{F}}
\newcommand{\Z}{\mathbb{Z}}
\newcommand{\Q}{\mathbb{Q}}
\newcommand{\C}{\mathbb{C}}
\newcommand{\N}{\mathbb{N}}
\newcommand{\M}{\mathrm{M}}
\newcommand{\Tr}{\mathrm{Tr}}
\newcommand{\alt}{\mathrm{Alt}}
\newcommand{\sgn}{\mathrm{sgn}}
\newcommand{\ctabc}{\mathfrak{C}}
\newcommand{\ctabd}{\mathfrak{D}}
\newcommand{\ctabs}{\mathfrak{S}}
\newcommand{\ctabm}{\mathfrak{M}}
\newcommand{\po}[1]{|#1|}
\newcommand{\por}[1]{\parallel #1\parallel}
\newcommand{\pair}[1]{\widehat{#1}}
\newcommand{\poset}{\mathrm{P}}
\newcommand{\kpo}{\mathcal{K}}
\newcommand{\posetring}{\mathcal{P}}
\newcommand{\sg}{\mathrm{S}}
\newcommand{\specht}{\mathrm{Sp}}
\newcommand{\conjpart}[1]{\widetilde{#1}}
\newcommand{\colgp}{\mathrm{cl}}
\newcommand{\colsym}{\mathrm{c}}
\newcommand{\ipo}{\mathcal{E}}
\newcommand{\ctab}{\mathcal{C}}
\newcommand{\diagpart}{\delta}
\newcommand{\rowtab}{\mathcal{R}}
\newcommand{\mdisc}{\mathrm{mdisc}}
\newcommand{\vertex}{\node[vertex]}
\title{On generating the ring of matrix semi-invariants
%\\{\small \sc (Full Version)}
}
\author{
G\'abor Ivanyos\thanks{Institute for Computer Science and Control, Hungarian 
Academy of Sciences, 
Budapest, Hungary. 
({\tt Gabor.Ivanyos@sztaki.mta.hu}).} 
\and
 Youming Qiao\thanks{Centre for Quantum Computation and Intelligent Systems,
 University of Technology, Sydney, Australia. 
 ({\tt jimmyqiao86@gmail.com})}
\and
K. V. Subrahmanyam \thanks{Chennai Mathematical Institute, Chennai, India.
({\tt kv@cmi.ac.in}).}
 }
\date{\today}
\begin{document}
\maketitle

\begin{abstract}
For a field $\F$, let $R(n, m)$ be the ring of invariant polynomials for the 
action of $\SL(n, \F) \times \SL(n, \F)$ on tuples of matrices -- $(A, C)\in\SL(n, 
\F) \times \SL(n, \F)$ sends $(B_1, \dots, 
B_m)\in M(n, \F)^{\oplus m}$ to $(AB_1C^{-1}, \dots, AB_mC^{-1})$. In this paper 
we call $R(n, m)$ the \emph{ring of matrix 
semi-invariants}. Let 
$\beta(R(n, m))$ be the smallest $D$ s.t. matrix semi-invariants of degree $\leq 
D$ generate $R(n, m)$. 
Guided by the Procesi-Razmyslov-Formanek approach of proving a strong degree bound 
for generating matrix invariants, we exhibit several interesting structural 
results for the 
ring of matrix semi-invariants $R(n, m)$ over fields of characteristic $0$. Using 
these results, we prove that $\beta(R(n, m))=\Omega(n^{3/2})$, and 
$\beta(R(2, m))\leq 4$. 
\end{abstract}

\section{Introduction}

%An explicit bound 
%$\beta$ s.t. all elements of degree $\leq \beta$ generate $R$ as 
%an algebra.

Let $\F$ be a field. In this article, we study the ring of invariant polynomials 
for the action 
of $\SL(n, \F) \times \SL(n, \F)$ on tuples of matrices -- $(A, C)\in\SL(n, \F) 
\times \SL(n, \F)$ sends $(B_1, \dots, 
B_m)\in M(n, \F)^{\oplus m}$ to $(AB_1C^{-1}, \dots, AB_mC^{-1})$. Denoted 
by $R(n, m)$, we call this ring the \emph{ring of matrix semi-invariants}, as (1) 
it is closely related to the classical ring of matrix invariants \cite{Pro76} 
(see below for 
the definition, and \cite{Domokos00,ANK07} for the 
precise relationship between these two rings); and (2) it is the ring of 
semi-invariants of the representation of the 
$m$-Kronecker quiver with dimension vector $(n, n)$. Here, the $m$-Kronecker 
quiver is the quiver with two vertices $s$ and $t$, 
with $m$ arrows pointing from $s$ to $t$. When $m=2$, it is the classical 
Kronecker quiver. The reader is referred to \cite{DW00,SV01,DZ01} for a 
description of the semi-invariants for arbitrary quivers. 

Let $\beta(R(n, m))$ be the smallest 
integer $D$ s.t. matrix semi-invariants of degree $\leq D$ generate $R(n, m)$, and 
let $\sigma(R(n, m))$ be the smallest integer $D$ s.t. matrix semi-invariants of 
degree $\leq D$ define the nullcone of $R(n, m)$.  

Recently, the quantity $\sigma(R(n, m))$ has found several applications in 
computational complexity theory. In particular, if $\sigma(R(n, m))$ is polynomial 
in $n$ and $m$, then it follows that (1) division gates can be efficiently 
removed in non-commutative arithmetic 
circuits with division (\cite{HW14}); and (2) there exists a deterministic  
polynomial-time 
algorithm that decides whether the non-commutative rank of a square matrix of 
linear forms is full or not over the rational number field (\cite{IQS1,Gurvits}).
We refer the interested reader to the cited works for further explanation on these 
connections. At present, the best bound for $\sigma(R(n, m))$ is $n!/\lceil 
n/2\rceil !$ over large enough fields \cite{IQS1}.

One natural way to upper bound $\sigma(R(n, m))$ is of course to upper bound 
$\beta(R(n, m))$. Over algebraically closed fields of characteristic $0$, by 
Derksen's result \cite{derksen_bound}, $\beta(R(n, m))\leq \max(2, 3/8\cdot 
n^4\cdot \sigma(R(n, m))^2)$. Therefore, if $\sigma(R(n, m))$ is polynomial in $n$ 
then $\beta(R(n, m))$ is polynomial in $n$ as well.\footnote{Over fields of 
characteristic $0$, by a theorem of Weyl \cite{Wey97}, $\beta(R(n, m))\leq 
\beta(R(n, n^2))$. Therefore if $\beta(R(n, m))$ is polynomial in $n$ and $m$ then 
$\beta(R(n, m))$ is just polynomial in $n$. See \cite{Dom_description} for 
more on this. } Another compelling reason 
to examine $\beta(R(n, m))$ is because of the following strong upper bound on 
$\beta$ for the ring of matrix invariants over fields of characteristic $0$. 

Consider the action 
of $A\in\SL(n, \F)$ on $(B_1, \dots, B_m)\in M(n, \F)^{\oplus m}$ by 
sending it to $(AB_1A^{-1},$ $\dots, AB_mA^{-1})$. The invariant ring w.r.t. this 
action is denoted as $S(n, m)$, and elements in $S(n, m)$ are called \emph{matrix 
invariants}. The structure of $S(n, m)$ is 
well-understood over fields of characteristic $0$: the first fundamental theorem 
(FFT), the second fundamental theorem (SFT), and an $n^2$ upper 
bound for $\beta(S(n, m))$ have been 
established in 1970's by Procesi, Razmysolov, and Formanek  
\cite{Pro76,Raz74,formanek_gen}. Note that when applied to $S(n, m)$ over 
characteristic $0$, Derksen's bound yields $\beta(S(n, m))\leq \max(2, 3/8\cdot 
n^2\cdot 
\sigma^2)$ and $\sigma(S(n, m))=n^{O(n^2)}$, far from the $n^2$ 
bound as mentioned above. 

On the other hand, for $R(n, m)$, as far as we are aware, the best bound for 
$\beta(R(n, m))$ is 
$O(n^4\cdot (n!)^2)$ over algebraically closed fields of characteristic $0$, by 
combining the abovementioned results of \cite{derksen_bound} and \cite{IQS1}.

Our 
goal in this paper is to prove a better bound on $\beta(R(n, m))$ by following the 
approach of Procesi, Razmyslov, and Formanek. Therefore, in the following we 
restrict ourselves to fields of characteristic $0$. While we do not achieve 
this, we describe several structural results for $R(n, m)$, 
including the second fundamental theorem (Proposition~\ref{prop:sft}). Though some 
of these 
structural results should be known to experts, we could not find 
them in the literature, so we provide full proof details. Furthermore, these 
results allow us to prove that $\beta(R(n, m))$ has a lower bound 
$\Omega(n^{3/2})$ (Proposition~\ref{prop:lower}).

One technical result that we believe is new, is 
Proposition~\ref{prop:simple_span}. Roughly speaking, there exists a linear basis 
of matrix semi-invariants, such that each polynomial in this basis can be 
associated with a bipartite graph. An upper bound of $D$ on $\beta(R(n, m))$ would 
follow, if we can 
prove that when the degree is $>D$, modulo the linear relations (as described in 
the second 
fundamental theorem), every polynomial can be written as a linear combination of 
those basis elements whose associated graphs are disconnected 
(Proposition~\ref{prop:deg}). 
Proposition~\ref{prop:simple_span} then states that when $D>n^2$, every matrix 
semi-invariant of degree $D$ 
can be written as a linear combination of those basis elements whose associated 
graphs are disconnected \emph{or non-simple} (e.g. with at least one multiple 
edges) modulo the linear relations. 

As an immediate consequence of Proposition~\ref{prop:simple_span}, we prove that 
$\beta(R(2, m))\leq 4$  over fields of characteristic 
$0$ in Theorem~\ref{thm:2times2}. While this bound is known from 
Domokos' explicit generating set for $\beta(R(2, m))$ \cite{Domokos00_2}, we think 
this suggests 
the validity of the approach of Razmyslov, Procesi and Formanek when applied to 
matrix semi-invariants. Furthermore, since in this approach we do not 
exhibit invariants explicitly, we believe this approach will generalize to 
larger $n$. 

\paragraph{More previous works.} Here we collect a few more previous 
works on matrix semi-invariants. To start with, since as mentioned, matrix 
semi-invariants are just semi-invariants of the representations of the $m$-Kronecker quiver, 
results on semi-invariants of quivers apply to matrix semi-invariants, e.g. the 
first fundamental theorem \cite{DW00,SV01,DZ01}.
%\footnote{\cite{SV01} proved it 
%over algebraically fields of characteristic $0$, while the other two works proved 
%it for fields of arbitrary .} 
Let us give one description from \cite{DZ01}: suppose $R(n, 
m)\subseteq \F[x_{i,j}^{(k)}]$ where $i, j\in[n]$, $k\in [m]$, and $x_{i,j}^{(k)}$ 
are independent variables. Let $X_k=(x_{i,j}^{(k)})$. Then for $A_1, 
\dots, A_m \in M(d, \F)$, $\det(A_1\otimes X_1+\dots+A_m\otimes X_m)$ is a matrix 
semi-invariant, and every matrix semi-invariant is a linear combination of such 
polynomials. Therefore, $(B_1, \dots, B_m)$ is in the nullcone, if and only if for 
all $d\in\Z^+$ and all $(A_1, \dots, A_m)\in M(d, \F)^{\oplus m}$, $A_1\otimes B_1
+\dots+A_m\otimes B_m$ is singular. 

A description of the nullcone of $R(n, m)$ can be found in \cite{BD06,ANK07}. 
For certain small $m$ or $n$, explicit generating sets of 
$R(n, m)$ 
have been computed in e.g. \cite{Domokos00,Domokos00_2,DD12}. In these cases, 
elements of degree $\leq n^2$ generate the 
ring.\footnote{We thank M. Domokos for pointing out this fact to us.}

Several results for matrix invariants over fields of positive characteristics are 
known: FFT was established by Donkin in \cite{Donkin92,Donkin93}, an 
$n^3$ upper bound for $\sigma$ can be 
derived from \cite[Proposition 9]{CIW}, and 
Domokos in 
\cite{Domokos02a,Domokos02b} proved an upper bound $O(n^7 m^n)$ on $\beta$.

\paragraph{Organization.} In Section~\ref{sec:overview} we briefly review the 
Procesi-Razmyslov-Formanek approach, and give an overview of our results. Then we 
describe 
the second fundamental theorem (Section~\ref{sec:sft}), the $\F \sg_{dn}$ bimodule 
structure (Section~\ref{sec:bimodule}), and the $\sg_{dn}$ diagonal action 
(Section~\ref{subsec:diag}). Finally, in Section~\ref{sec:deg}, we use these 
structural results to prove a lower bound on $\beta(R(n, m))$, and that 
$\beta(R(2, m))\leq 4$. 

%See, \cite{Domokos00},~\cite{ANK07} for the 
%precise relationship between the rings $R(n,m)$ and 

\section{An overview of the structural results}\label{sec:overview}

\paragraph{An outline of the Procesi-Razmyslov-Formanek result.} To motivate the 
results to be presented, we first review the $n^2$ bound 
for $S(n, m)$ over characteristic-$0$ fields by Razmyslov \cite{Raz74} and Procesi 
\cite{Pro76}, and 
further elaborated by Formanek \cite{formanek_gen}. Recall 
that $S(n, m)$ is the invariant ring of $A\in 
\SL(n, \Q)$ on $(B_1, \dots, B_m)\in M(n, \Q)^{\oplus m}$ by sending it to 
$(AB_1A^{-1},$ 
\dots, $AB_mA^{-1})$. Our exposition follows the one by Formanek 
\cite{formanek_gen}, and requires certain basic facts about the group 
algebra of the symmetric group as described therein.

%In this section we review the degree upper bound on generating the matrix 
%invariants. 

%Matrix invariants are those invariants for the $\SL_n$ conjugation action on 
%$\M_n^{\oplus m}$. 
Let $X_i$, $i\in[m]$ be an $n\times n$ matrix of indeterminants. 
Then $\Tr(X_{i_1}\cdot X_{i_2}\cdot \dots \cdot X_{i_k})$, $k\in \Z^{+}$, $i_j\in 
[m]$ generate all matrix invariants. The $n^2$ upper bound implies that 
invariants of this form with $k\leq n^2$ generate the ring of matrix invariants 
already. 

The proof for this upper bound starts with identifying multilinear matrix 
invariants in the form $\Tr(X_{i_1}\cdot X_{i_2}\cdot \dots \cdot X_{i_k})$, 
$i_j\in[m]$ pairwise distinct, with the permutations of a single cycle $(i_1, i_2, 
\dots, i_k)$. Such identification gives a correspondence of multilinear invariants 
of degree $d$ with the group algebra $\F\sg_d$. 

The second step is to use the second fundamental theorem of matrix invariants, 
which suggests that the linear relations are spanned by the two-sided ideals 
indexed by partitions of length $> n$. At this 
point, it is clear that $b$ is a degree bound, if and only if, for any $d>b$, the 
reducible 
elements (those permutations whose cycle types are not of a single cycle of length 
$d$) together with the linear relations span the whole space. Then, by using the 
standard bilinear form for $\C\sg_d$ (setting the permutations as an orthonormal 
basis), this is equivalent to showing that 
the space $J$ spanned by the irreducible elements (those permutations whose cycle 
types are a single cycle of length $d$), and the space $I_{\leq n}$ spanned by the 
two-sided 
ideals indexed by partitions of length $\leq n$, intersect trivially. 

The third step is to observe that the linear map $f$ defined by $\sigma \to 
\sgn(\sigma)\sigma$ where $\sigma\in \sg_d$, sends the two-sided ideal 
corresponding to $\lambda$, to the two-sided ideal corresponding to 
$\conjpart{\lambda}$, the conjugate of 
$\lambda$. At the same time, $f$ preserves every $1$-dimensional subspace in $J$, 
as $\sgn(\sigma)$ for any permutation $\sigma\in J$ is the same. Thus when $d > 
n^2$, $J$ and $I_{\leq n}$ have to intersect trivially: by contradiction, suppose 
a nonzero $v\in J \cap I_{\leq n}$. Then $f(v)=\pm v$, thus 
$v\in I_{\leq 
n} \cap f(I_{\leq n})$. However, $f(I_{\leq n})$ is spanned by the two-sided 
ideals of width $\leq n$. When $d > n^2$, there does not exist a partition with 
both length and width $\leq n$. The $n^2$ degree bound then follows.

\paragraph{Overview of the structural results of $R(n, m)$.}
To carry on the above strategy for $S(n, m)$, a first step is to give the 
multilinear invariants in $R(n, m)$ a combinatorial description, which we take 
from \cite{ANK07}. Since in this case we have multilinear invariants only when  
$n$ divides $m$, we focus on $R(n, dn)$ in the following. Briefly speaking, 
we can identify a natural spanning set of multilinear invariants with $n$-regular 
bipartite graphs with $d$ left (resp. right) vertices. 

Then it is necessary to obtain the second fundamental theorem (SFT) for $R(n, 
dn)$. While certainly known to experts, we could not find an explicit statement in 
the literature, so we prove it in Proposition \ref{prop:sft}. Then we 
need to describe the $\F\sg_{dn}$ bimodule structure of $R(n, 
dn)$ (Fact~\ref{fact:bimodule}).
%\footnote{Two-sided ideals of $\C\sg_d$ are just $\C\sg_{d}$ sub-bimodules of 
%$\C\sg_d$.} 
This is 
possible since $R(n, dn)$ can be viewed as a tensor product of two column tabloid 
modules \cite[Chap. 7.4]{Fulton}. So $b$ is a degree bound, if 
whenever $dn>b$, with the help of relations, those multilinear invariants 
indexed by 
connected bipartite graphs can be written as a sum of those ones indexed by 
disconnected bipartite graphs (Proposition~\ref{prop:deg}). 
Furthermore, a natural bilinear form on 
$R(n, dn)$ can be defined by setting the standard elements in $R(n, dn)$ as 
forming an orthonormal basis (Fact~\ref{fact:bimodule} (3)). This bilinear form 
makes sense 
w.r.t. the bimodule 
structure as well due to James' submodule theorem (Fact~\ref{fact:submodule}). Up 
to 
this 
point, we successfully parallel everything in the $S(n, m)$ setting. 

What is missing is an analogue of the linear map $f$ - this prevents us from completing this 
strategy. Nevertheless, what we've develop allows us to prove that for $b$ to be 
a degree bound, then $b$ must be $\Omega(n^{3/2})$ (Proposition~\ref{prop:lower}). 

To make progress, we exploit in depth the diagonal action of $\sg_{dn}$ on $R(n, 
dn)$. Fortunately, the orbits under this action can be identified with 
row tabloid modules \cite[Chap. 7.2]{Fulton}. From 
this structure we obtain new forms of relations (Equation \ref{eq:alt}). We then prove 
that when $b>dn$, using these relations, each multilinear invariant indexed by a 
connected 
and simple (e.g. with no multiple edges) bipartite graph, can be written as a sum 
of those 
indexed by non-simple graphs (Proposition~\ref{prop:simple_span}). This provides a 
non-trivial 
reduction result in 
the spirit of the degree bound statement. 

We believe that the results we prove in this paper, will be 
useful to finally get a good degree bound for $R(n, m)$, over fields of 
characteristic $0$. 

\section{The second fundamental theorem}\label{sec:sft}

\paragraph{Reduction to the multilinear case.} Over characteristic $0$ fields, the 
well-known two procedures, polarization and restitution, reduce many questions for 
general invariants to multilinear invariants.

For completeness, let us demonstrate this in the case of $R(n, m)\subseteq 
\F[x^{(k)}_{i,j}]_{k\in[m], i,j\in[n]}$. Let $X_k=(x^{(k)}_{i,j})_{i,j\in[n]}$. 
Recall that each $f\in 
R(n, m)$ has degree divisible by $n$. Suppose $\ell(X_1, \dots, X_m)=\sum_i a_i 
f_i$ forms a relation, where $f_i\in R(n, m)$. W.l.o.g. we can assume the $f_i$'s 
are multi-homogeneous, that is for every fixed $j\in[m]$, the degrees of $f_i$'s 
w.r.t. $X_j$ are the same for every $i$. Otherwise, we can divide $\ell$ into 
the multi-homogeneous 
components, and each component will again be a relation. Suppose $\deg(X_k)$ 
in $\ell$ is $d_k$, then $\sum_k d_k=dn$ for some $d\in\Z^+$. Then for each 
$k\in[m]$, we introduce $d_k$ copies of $X_k$ as $X_{(k, 1)}, \dots, X_{(k, 
d_k)}$, as well as $d_k$ variables $y_{k, 1}, \dots, y_{k, d_k}$. Now consider 
the 
coefficient of $\prod_{i,j}y_{i,j}$ in $\ell(X_{(1, 1)}y_{1,1}+\dots+X_{(1, d_1)} 
y_{1, d_1}, \dots, X_{(m, 1)}y_{m,1}+\dots+X_{(m,d_m)}y_{m,d_m})$, and let it be 
$\ell'$ -- the polarization of $\ell$. It can be seen that $\ell'$ is a multilinear 
invariant in $dn$ matrices, and $\ell'$ is a relation as well. Now we modify 
$\ell'$ by substituting $X_{(i, j)}$ by $X_i$, and let the result be $\ell''$ -- the 
substitution of $\ell'$. Over characteristic $0$ fields, we see that 
$\ell''=\prod_{i}(d_i!) \ell$. As demonstrated above, since every relation can be 
obtained by restituting some multilinear relation, it is enough to understand 
multilinear relations.

%In the case of SFT, let $R$ be a 
%graded invariant 
%ring, and $F(f_1, \dots, f_k)=\sum_i E_i(f_1, \dots, f_k)$ be a 
%relation for $R$ of degree $d$, where $f_i\neq 0\in R$, and every $E_i$ is a 
%monomial in $f_i$'s. Then 
%after fully polarizing $\ell$, every term becomes a multilinear 
%invariant, and we obtain a relation of these multilinear invariants. 
%Then, after the restitution, it yields $d!\cdot F$. 
%
%We summarize the above discussion as follows.
%\begin{proposition}
%Over characteristic $0$ fields, any relation of invariants is obtained by 
%applying 
%the restitution procedure to some relation of multilinear invariants. 
%\end{proposition}
\paragraph{Multilinear invariants of $R(n, m)$.} 
As invariants in $R(n, m)$ are of 
degree divisible by $n$, $R(n, m)$ has a multilinear invariant if and only if $n$ 
divides $m$. In the following we consider $R(n, m)$ for $m=dn$ where $d\in \Z^+$.
We use the descriptions of invariants in $R(n, m)$ given in \cite{ANK07}. 
We reformulate their results here. 

We define a set of symbols $\poset=\po{i_1, 
\dots, i_n}$, $i_j\in[dn]$ with the anti-symmetric property: 
$$\po{i_1, \dots, i_a, \dots, i_b, \dots, i_n}=-\po{i_1, \dots, i_b, \dots, i_a, 
\dots, i_n}.$$
Note that as $\F$ is of characteristic $0$, the anti-symmetric property implies 
that if there exist $i_j=i_k$ for $j\neq k$ then $\po{i_1, \dots, i_n}=0$.

Then let $\posetring$ be $\F\langle \poset\rangle$, the noncommutative polynomial 
ring with variables from $\poset$. A degree-$d$ monomial in $\posetring$ is of 
the form 
$$
\po{i_{1,1}, \dots, i_{1,n}}\circ \dots \circ \po{i_{d,1},\dots,i_{d,n}},
$$ 
where $\circ$ denotes the noncommutative product. 
We often record it as an $d\times n$ tableau
\begin{equation}\label{eq:tab}
\begin{vmatrix}
i_{1,1} & i_{1,2} & \cdots & i_{1,n} \\
  i_{2,1} & i_{2,2} & \cdots & i_{2,n} \\
  \vdots  & \vdots  & \ddots & \vdots  \\
  i_{d,1} & i_{d,2} & \cdots & i_{d,n} 
\end{vmatrix}.
\end{equation}
%$$
%\begin{vmatrix}
%i_{1,1} & i_{1,2} & \cdots & i_{1,n} \\
%  i_{2,1} & i_{2,2} & \cdots & i_{2,n} \\
%  \vdots  & \vdots  & \ddots & \vdots  \\
%  i_{d,1} & i_{d',2} & \cdots & i_{d,n} 
%\end{vmatrix}\quad.
%$$
%We use $\posetring(d)$ be denote 
%the $\F$-vector space spanned by degree-$d$ monomials in $\posetring$.

A degree-$d$ monomial $\po{i_{1,1}, \dots, i_{1,n}}\circ \dots \circ 
\po{i_{d,1},\dots,i_{d,n}}$ is repetition-free, if $\{i_{j,k}, j\in[d], 
k\in[n]\}=[nd]$. Let 
$\posetring(d)$ be the $\F$-vector space of spanned by degree-$d$ repetition-free 
monomials in $\posetring$. 

Likewise we define another set of symbols $\pair{\poset}=\po{\pair{i_1}, \dots, 
\pair{i_n}}$, $i_j\in[nd]$ satisfying also the anti-symmetric property, and define
$\pair{\posetring}$, and $\pair{\posetring}(d)$ as before. The vector space 
$\posetring(d)\otimes \pair{\posetring}(d)$ then has a basis $S \otimes \pair{T}$, 
$S\in \posetring(d)$ and $\pair{T}\in\pair{\posetring}(d)$. 

%Now turn to the vector space $V^{\oplus dn}\oplus W^{\oplus dn}$ where $V\cong 
%W\cong \F^n$. Viewing 
%$V^{\oplus dn}$ as a matrix of size $n\times dn$, the 
%determinant of a maximal submatrix, determined by the indices of the columns, is 
%usually referred as a Pl\"ucker coordinate. If the indices of the columns are 
%$i_1, \dots, i_n$, this Pl\"ucker coordinate is denoted as $\por{i_1, \dots, 
%i_n}$, 
%where $i_j\in[dn]$. Similarly $\por{\pair{i_1}, \dots, \pair{i_n}}$ denotes the 
%Pl\"ucker coordinate indexed by $i_1, \dots, i_d$ for $W^{\oplus dn}$. 

Now consider the vector space $(V\otimes W)^{\oplus dn}$ where $V\cong W\cong 
\F^n$. This is spanned by vectors of the form $(v_1\otimes 
w_1) \oplus \dots \oplus (v_{dn}\otimes w_{dn})$, $v_i\in V$, $w_j\in W$. Given 
$i_1,i_2,\ldots,i_n \in[dn]$, we 
define a function $\por{i_1, \dots, i_n}$, on $(v_1\otimes w_1) \oplus \dots \oplus (v_{dn}\otimes 
w_{dn})$, as follows. $\por{i_1, 
\dots, i_n}$ sends $(v_1\otimes 
w_1) \oplus \dots \oplus (v_{dn}\otimes w_{dn})$ to the determinant of the 
$n\times n$ matrix $[v_{i_1}, \dots, v_{i_n}]$. Likewise, define $\por{\pair{j_1}, 
\dots, \pair{j_n}}$, $j_k\in[dn]$ sending $(v_1\otimes w_1) \oplus \dots \oplus 
(v_{dn}\otimes w_{dn})$ to the 
determinant of the matrix $[w_{j_1}, \dots, w_{j_n}]$. $\por{i_1, \dots, i_n}$ is 
known as a \emph{Pl\"ucker coordinate}. We then consider the product function
$$\por{i_{1,1}, \dots, i_{1,n}}\cdot \ldots \cdot \por{i_{d,1},\dots,i_{d,n}} 
\cdot 
\por{\pair{j_{1,1}}, \dots, \pair{j_{1,n}}}\cdot \ldots \cdot 
\por{\pair{j_{d,1}},\dots,\pair{j_{d,n}}}$$
where $\{i_{k, \ell}, k\in[d], \ell\in[n]\}=\{j_{k, \ell}, k\in[d], 
\ell\in[n]\}=[nd]$, and extend this function to all of $(V\otimes W)^{\oplus dn}$ by linearity. 
This gives rise to a multilinear function on $(V\otimes W)^{\oplus dn}$.

We then define a linear map $\phi$ from $\posetring(d)\otimes 
\pair{\posetring}(d)$ to multilinear functions on $(V\otimes W)^{\oplus 
dn}$, by sending 
\begin{multline}\label{eq:phi}
(\po{i_{1,1}, \dots, i_{1,n}}\circ \ldots \circ \po{i_{d,1},\dots,i_{d,n}}) 
\otimes 
(\po{\pair{j_{1,1}}, \dots, \pair{j_{1,n}}}\circ \ldots \circ 
\po{\pair{j_{d,1}},\dots,\pair{j_{d,n}}}) \\
\text{to} \quad 
\por{i_{1,1}, \dots, i_{1,n}}\cdot \ldots \cdot \por{i_{d,1},\dots,i_{d,n}} 
\cdot 
\por{\pair{j_{1,1}}, \dots, \pair{j_{1,n}}}\cdot \ldots \cdot 
\por{\pair{j_{d,1}},\dots,\pair{j_{d,n}}}
\end{multline}
and extend by linearity. It is understood that if we write $\phi$ to apply to 
monomials in $\posetring(d)$ or $\pair{\posetring}(d)$, we replace 
$\po{\cdot}$ to $\por{\cdot}$. It is not difficult to observe (see e.g. 
\cite{ANK07}) that the image of 
$\phi$ are precisely the multilinear functions on $(V\otimes W)^{\oplus 
dn}$ which are invariant under the natural action $\SL(n, \F)\times \SL(n, \F)$. 
%They exhibit a further translation 
%into the matrix-theoretic setting, which leads to Theorem~\ref{thm:fft}.

The formalism above is necessary for the formulation of the second fundamental 
theorem, which amounts to describe the kernel of $\phi$.

\paragraph{Second fundamental theorem for multilinear invariants.} 
In this part we describe the kernel of $\phi$. It is clear that when $d=1$, 
$\ker(\phi)$ is $0$. In the following we assume $d\geq 2$. 

Recall the Pl\"ucker relations for Pl\"ucker coordinates: for $i_1, \dots, i_n, 
j_1, \dots, j_n\in[nd]$ and $k\in[n]$, we have
\begin{eqnarray*}
& & \por{i_1, \dots, i_n} \cdot \por{j_1,\dots, j_n} \\
& = & \sum_{1\leq s_1 < s_2 < \dots< s_k\leq n} \por{i_1, \dots, j_1, \dots, j_k, 
\dots, i_n} \cdot \por{i_{s_1}, \dots, i_{s_k}, j_{k+1}, \dots, j_n},
\end{eqnarray*}
where $j_1, \dots, j_k$ are in positions $s_1, \dots, s_k$. 
For a monomial $T$ in $\posetring(d)$ recorded as in Equation~\ref{eq:tab},
following Fulton \cite[pp. 97]{Fulton}\footnote{With a slight change from column 
as in \cite{Fulton} to row here.}, we use $\pi_{j, k}(T)$ to denote the 
vector in 
$\posetring(d)$ of the form $\sum T'$, where $T'$ runs over the tableaux obtained 
by switching the first $k$ elements in the $j+1$th row, with $k$ elements in the 
$j$th row of $T$. Let $\kpo(d)$ be the span of all $T-\pi_{j,k}(T)$, where 
$T\in\posetring(d)$, $j\in[n-1]$, and $k\in[n]$. In particular, note that when 
$k=n$, $\pi_{j, n}(T)$ is just switching the $j$th and $j+1$th row. So though 
monomials in $\posetring(d)$ are non-commutative, modulo $\kpo(d)$ they become 
commutative, which is consistent with the image of $\phi$.

A monomial $T$ in $\posetring(d)$ is \emph{standard}, if its tableau is (strictly) 
increasing in each row and each column. Following a procedure called the 
straightening, it is well-known that the standard 
monomials form a basis for the quotient space $\posetring(d)/\kpo(d)$. 

Similarly we have $\pair{\kpo}(d)$ in $\pair{\posetring}(d)$. By Pl\"ucker 
relations, we know that $\langle \kpo(d)\otimes \pair{\posetring}(d) \cup 
\posetring(d)\otimes \pair{\kpo}(d)\rangle$ lies in $\ker(\phi)$. (Recall that 
$\langle \cdot \rangle$ denotes the linear span.) We show that these two sets are 
equal.

\begin{proposition}\label{prop:sft}
Let notations be as above. We have $\langle \kpo(d)\otimes \pair{\posetring}(d) 
\cup \posetring(d)\otimes \pair{\kpo}(d)\rangle=\ker(\phi)$.
\end{proposition}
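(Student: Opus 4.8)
The plan is to obtain the nontrivial inclusion $\ker(\phi)\subseteq\langle \kpo(d)\otimes \pair{\posetring}(d) \cup \posetring(d)\otimes \pair{\kpo}(d)\rangle$ from a dimension count, using that the reverse inclusion has already been established via the Pl\"ucker relations. Set $\Delta:=\posetring(d)\otimes\pair{\posetring}(d)$ and $N:=\langle \kpo(d)\otimes \pair{\posetring}(d) \cup \posetring(d)\otimes \pair{\kpo}(d)\rangle=\kpo(d)\otimes\pair{\posetring}(d)+\posetring(d)\otimes\pair{\kpo}(d)$, so that $N\subseteq\ker(\phi)\subseteq\Delta$. There is then a surjection $\Delta/N\twoheadrightarrow\Delta/\ker(\phi)\cong\im(\phi)$, and since all these spaces are finite dimensional, this surjection is an isomorphism — which is precisely the assertion $N=\ker(\phi)$ — as soon as $\dim(\Delta/N)=\dim\im(\phi)$. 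So it suffices to compute these two dimensions and check that they coincide.

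For $\dim(\Delta/N)$: because $N$ is the \emph{sum} of the subspaces $\kpo(d)\otimes\pair{\posetring}(d)$ and $\posetring(d)\otimes\pair{\kpo}(d)$, there is a canonical isomorphism $\Delta/N\cong\big(\posetring(d)/\kpo(d)\big)\otimes\big(\pair{\posetring}(d)/\pair{\kpo}(d)\big)$. By the straightening law recalled above, each of $\posetring(d)/\kpo(d)$ and $\pair{\posetring}(d)/\pair{\kpo}(d)$ has the standard monomials as a basis, and a standard monomial is exactly a standard Young tableau of the rectangular shape with $d$ rows and $n$ columns filled with $1,\dots,dn$. Writing $f^{\mu}$ for the number of standard Young tableaux of shape $\mu$ and $(n^{d})$ for the partition with $d$ parts each equal to $n$, we obtain $\dim(\Delta/N)=\big(f^{(n^{d})}\big)^{2}$.

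For $\dim\im(\phi)$: recall that $\im(\phi)$ is exactly the space of multilinear functions on $(V\otimes W)^{\oplus dn}$ that are invariant under $\SL(n,\F)\times\SL(n,\F)$. As a representation of $\SL(V)\times\SL(W)$, the space of all multilinear functions is $\big((V\otimes W)^{*}\big)^{\otimes dn}\cong(V^{*})^{\otimes dn}\otimes(W^{*})^{\otimes dn}$, so its invariant subspace factors as $\big((V^{*})^{\otimes dn}\big)^{\SL(V)}\otimes\big((W^{*})^{\otimes dn}\big)^{\SL(W)}$. In characteristic $0$ one has $\dim\big((V^{*})^{\otimes dn}\big)^{\SL(V)}=\dim\big(V^{\otimes dn}\big)^{\SL(V)}$, and Schur--Weyl duality gives the decomposition $V^{\otimes dn}=\bigoplus_{\lambda}\mathbb{S}_{\lambda}(V)\otimes\specht^{\lambda}$, over partitions $\lambda\vdash dn$ with at most $n$ parts; the Schur functor $\mathbb{S}_{\lambda}(V)$ carries the trivial $\SL(V)$-action precisely when $\lambda$ is the rectangle $(d^{n})$, in which case it is one dimensional. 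Hence $\dim\big(V^{\otimes dn}\big)^{\SL(V)}=f^{(d^{n})}$, so $\dim\im(\phi)=\big(f^{(d^{n})}\big)^{2}$. Since $(d^{n})$ and $(n^{d})$ are conjugate partitions, $f^{(d^{n})}=f^{(n^{d})}$, the two dimensions agree, and $N=\ker(\phi)$ follows.

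In this outline the genuinely formal parts — the reverse inclusion (granted by Pl\"ucker), the isomorphism $\Delta/N\cong\big(\posetring(d)/\kpo(d)\big)\otimes\big(\pair{\posetring}(d)/\pair{\kpo}(d)\big)$, and the straightening basis — are routine or already available; the step that carries the actual content, and the one I would check most carefully, is the evaluation of $\dim\im(\phi)$: one has to pass correctly from multilinear $\SL_{n}$-invariants on $V^{\oplus dn}$ to the rectangular Specht module via Schur--Weyl, verify that replacing $V$ by $V^{*}$ — or by any of the twisted forms of $\F^{n}$ that appear in the action on $V\otimes W$ — leaves the relevant invariant dimension unchanged, and reconcile the shape $(d^{n})$ coming from representation theory with the shape $(n^{d})$ coming from the straightening basis by conjugating partitions.
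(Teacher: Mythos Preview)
Your argument is correct, and it takes a genuinely different route from the paper. The paper proves the inclusion $\ker(\phi)\subseteq N$ by a direct specialization: given $\ell\in\ker(\phi)$, it first straightens the left tensor factors to standard monomials $S_1',\dots,S_p'$, writes $\ell\equiv\sum_i S_i'\otimes h_i'$ modulo $\kpo(d)\otimes\pair{\posetring}(d)$, and then shows each $h_i'\in\pair{\kpo}(d)$ by evaluating at a point $\vec{w}\in W^{\oplus dn}$ where $\phi(h_i')(\vec{w})\neq 0$ and invoking the linear independence of the $\phi(S_j')$ as functions on $V^{\oplus dn}$. In effect the paper bootstraps the two-factor statement from the classical one-factor SFT for Pl\"ucker coordinates (that standard bracket monomials are linearly independent, and that $\kpo(d)$ is exactly the kernel on a single factor), via a ``freeze one side'' argument.

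Your proof replaces this local specialization by a global dimension count: $\dim(\Delta/N)=\big(f^{(n^{d})}\big)^{2}$ from straightening on each factor, $\dim\im(\phi)=\big(f^{(d^{n})}\big)^{2}$ from Schur--Weyl, and equality of the two via conjugation of partitions. This is cleaner and more conceptual, and it makes transparent that the result is really two copies of the one-factor statement tensored together; the cost is that it imports Schur--Weyl duality as a black box, whereas the paper's argument stays entirely within the elementary Pl\"ucker/straightening framework already set up. Both approaches ultimately rest on the same one-factor fact---standard tableaux index a basis of the $\SL_n$-invariants in $V^{\otimes dn}$---but the paper accesses it as linear independence of functions, while you access it as a dimension formula.
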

\begin{proof}
It remains to prove that $\ker(\phi)\subseteq \langle \kpo(d)\otimes 
\pair{\posetring}(d) \cup \posetring(d)\otimes \pair{\kpo}(d)\rangle$. 
Let $\ell$ be in $\ker(\phi)$. Suppose $\ell$ is 
$$c_1 
S_1\otimes \pair{T_1}+c_2 S_2\otimes \pair{T_2}+\dots+c_k S_k\otimes 
\pair{T_k}$$ 
where $c_i$'s are in $\F$, and $S_i$'s (resp. $\pair{T_i}$'s) are monomials in 
$\posetring(d)$ (resp. $\pair{\posetring}(d)$). Note that $S_i$'s (resp. 
$\pair{T_i}$'s) are not necessarily distinct. We first arrange according to 
$S_i$'s, and write $\ell$ as 
$$
S_1\otimes h_1+ S_2\otimes h_2+\dots + S_{k'}\otimes h_{k'}
$$ 
where $h_i\in \pair{\posetring}(d)$, and $S_1$, \dots, $S_{k'}$ are distinct. 
Modulo the space $\kpo(d) \otimes \pair{\posetring}(d)$, we 
 express each $S_i$ as a linear combination of those ones indexed by standard 
tableaux. Then by re-grouping according to the standard tableaux, $\ell$ is 
expressed as 
$$
S_1'\otimes h_1'+ S_2'\otimes h_2'+\dots + S_p'\otimes h_p'
$$
where $h_i'\in \pair{\posetring}(d)$, and $S_1'$, \dots, $S_p'$ are standard and 
distinct. Now we claim that every 
$S_i'\otimes h_i'$ must be in $\posetring(d) \otimes \pair{\kpo}(d)$. W.l.o.g. 
assume $S_1'\otimes h_1'$ is not, that is $h_1'$ is not in $\pair{\kpo}(d)$. Then 
$\phi(h_1')$ is not a zero function on $W^{\oplus nd}$. So there exists some 
$\vec{w}=w_1\oplus w_2\oplus \dots \oplus 
w_{nd}$ s.t. $\phi(h_1')(\vec{w})\neq 0$. Now we restrict $\phi(\ell)$ to the 
set  $(V\otimes w_1)\oplus \dots\oplus (V\otimes w_{nd})$ which yields 
$$
\phi(S_1')\cdot \phi(h_1')(\vec{w})+ \phi(S_2')\cdot \phi(h_2')(\vec{w})+\dots + 
\phi(S_j')\cdot \phi(h_j')(\vec{w}).
$$
Since $S_i'$'s are standard and distinct, 
$\phi(S_i')$'s are linearly independent as functions on $V^{\oplus dn}$. As 
$\phi(h_1')(\vec{w})\neq 0$, the
restriction of $\ell$ on $(V\otimes w_1)\oplus \dots\oplus (V\otimes w_{nd})$, 
as a 
function on $V^{\oplus d}$, is nonzero.
This contradicts $\ell$ being a zero function on $(V\otimes W)^{\oplus 
dn}$. So every $S_i'\otimes h_i'$ is in $\posetring \otimes \pair{\kpo}$. From 
this we conclude that $\ell\in \langle \kpo(d)\otimes \pair{\posetring}(d) \cup 
\posetring(d)\otimes \pair{\kpo}(d)\rangle$.
\end{proof}

\section{The $\F\sg_{dn}$ bimodule structure}\label{sec:bimodule}

There exists a natural action of $\sg_{dn}\times \sg_{dn}$ on 
$\posetring(d)\otimes \pair{\posetring}(d)$. Suppose $S$ and $\pair{T}$ are 
monomials in $\posetring$ and $\pair{\posetring}$ respectively. Then $(\sigma, 
\tau)\in\sg_{dn}\times\sg_{dn}$ sends $S\otimes \pair{T}$ to $S^{\sigma}\otimes 
\pair{T}^{\tau}$, where $\sigma$ and $\tau$ permute the entries of $S$ and 
$\pair{T}$. This endows $\posetring(d)\otimes \pair{\posetring}(d)$ an 
$\F\sg_{nd}$ bimodule structure. This action descends to the relations describing the second 
fundamental theorem: if $\ell$ is a relation (in $\ker(\phi)$), then 
$\ell^{(\sigma, \tau)}$ is also a relation. So $\ker(\phi)$ is an $\F\sg_{dn}$ 
sub-bimodule, whose structure will be described in the following 
Fact~\ref{fact:bimodule}.

Such an action on $\sg_{dn}$ on $\posetring(d)$ is well-understood: this 
is the column tabloid module as discussed in \cite[Chap. 
7.4]{Fulton}.\footnote{Note that the notation in \cite{Fulton} is a bit different 
from ours; there the column tableux satisfy anti-symmetric property along the 
columns. So one needs to flip the tableaux here to match the results there.} The 
column 
tabloid module is a dual of the more well-known row tabloid module as shown in 
\cite[Chap. 7.2]{Fulton}. We collect some basic facts about the column tabloid 
module adapted to our setting. Recall that for a nonnegative integer $s$, a 
partition of size $s$ is a non-increasing sequence of nonnegative integers 
$\lambda=(\lambda_1, \dots, \lambda_\ell)$, with 
$\sum_{i=1}^\ell\lambda_i=s$. This is denoted as 
$\lambda \vdash s$. We identify $(\lambda_1, \dots, \lambda_\ell)$ with 
$(\lambda_1, \dots, \lambda_\ell, 0, \dots, 0)$; namely any trailing zeros are 
assumed implicitly if required. The conjugate of $\lambda$ denoted by 
$\conjpart{\lambda}$ is a partition of $s$ with 
$\conjpart{\lambda}_i=|\{j\in[\ell] \text{ s.t. }\lambda_j\geq i\}|$. The height 
of $\lambda$ is $\max(i\in[\ell]\mid \lambda_i\neq 0)$, and the width is 
$\lambda_1$. Partitions are usually 
represented using Young diagrams: that is a concatenation of rows of boxes 
arranged to be left aligned, with the $i$th row having $\lambda_i$ boxes. For two 
partitions 
$\nu=(\nu_1, \dots, \nu_k)$ and $\lambda=(\lambda_1, \dots, \lambda_\ell)$, $\nu$ 
dominates $\lambda$ if for any $j\in\Z^+$, $\sum_{i=1}^j \nu_i\geq \sum_{i=1}^j 
\lambda_i$. We use $d^n$ to denote the partition $(d, \dots, d)$ of height $n$. 
For a partition $\lambda$ of $[dn]$, $\specht(\lambda)$ denotes the Specht module 
corresponding to $\lambda$ over $\F$. 
\begin{fact}[{\cite[Chap. 7.4]{Fulton}}]\label{fact:column_tabloid}
\begin{enumerate}
\item As an $\F\sg_{dn}$ module, $\posetring(d)$ 
decomposes as $$\specht(d^n)\oplus 
\big(\bigoplus_{\lambda\vdash dn}\specht(\lambda)^{\oplus c_{\lambda}}\big),$$
where $\lambda$ runs over partitions of $dn$ strictly dominated by $d^n$, and 
$c_\lambda=K_{\conjpart{\lambda}, n^d}$, the Kostka number w.r.t. 
$\conjpart{\lambda}, n^d$.
\item Given a monomial $S\in\posetring(d)$ represented by a tableau, the 
column subgroup of $S$, $\colgp(S)$ is the subgroup of $\sg_{dn}$ that 
preserves the columns of $S$. The (unsigned) column symmetrizer 
$\colsym(S)=\sum_{\pi\in \colgp(S)}\pi$. Then $\ipo(d):=\langle \colsym(S)\cdot S 
\mid S\in\posetring(d) \text{ are monomials}\rangle\cong \specht(d^n)$.
\item $\kpo(d)$ is a submodule of $\posetring(d)$ and is isomorphic to 
$\bigoplus_{\lambda\vdash dn}\specht(\lambda)^{\oplus c_{\lambda}}$.
\end{enumerate}
\end{fact}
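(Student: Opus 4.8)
The plan is to deduce all three parts from the classical representation theory of $\sg_{dn}$ (the tabloid and Specht modules, Fulton \cite{Fulton}, Chapters~7.2--7.4), so that the actual work is a careful translation that respects the transpose and sign conventions flagged in the footnote. The first step is to recognise $\posetring(d)$ as an induced module: a repetition-free degree-$d$ monomial is a filling of the $d\times n$ rectangle by $[dn]$ taken up to the sign of permutations acting within each row, so, writing $H=\sg_n\times\dots\times\sg_n$ ($d$ copies, the $j$-th permuting the positions in row $j$), one has $\posetring(d)\cong\mathrm{Ind}_{H}^{\sg_{dn}}(\sgn\boxtimes\dots\boxtimes\sgn)$, which by the projection formula equals $M^{(n^d)}\otimes\sgn$ with $M^{(n^d)}$ the usual permutation module on row tabloids. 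Concretely the isomorphism sends a tabloid $\{U_1,\dots,U_d\}$ to the monomial whose $j$-th row lists $U_j$ increasingly, and one checks it intertwines the actions up to the sign twist.

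\textbf{Part 1.} I would then apply Young's rule, $M^{(n^d)}\cong\bigoplus_{\lambda\trianglerighteq n^d}\specht(\lambda)^{\oplus K_{\lambda,n^d}}$ (with $\specht(n^d)$ appearing once since $K_{n^d,n^d}=1$), together with $\specht(\lambda)\otimes\sgn\cong\specht(\conjpart{\lambda})$; reindexing by $\mu=\conjpart{\lambda}$, which reverses dominance, gives $\posetring(d)\cong\bigoplus_{\mu\trianglelefteq d^n}\specht(\mu)^{\oplus K_{\conjpart{\mu},n^d}}$. Splitting off the $\mu=d^n$ summand yields exactly the claimed decomposition with $c_\lambda=K_{\conjpart{\lambda},n^d}$.

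\textbf{Parts 2 and 3.} For Part 2, the unique copy of $\specht(d^n)$ in $\posetring(d)$ is the image under the isomorphism above of the Specht submodule $\specht(n^d)\subseteq M^{(n^d)}$, which is spanned by the polytabloids $e_T=\bigl(\sum_{\sigma\in C_T}\sgn(\sigma)\sigma\bigr)\{T\}$ with $C_T$ the column stabilizer $\cong\sg_n^{\times d}$. The point I would isolate is that on $M^{(n^d)}\otimes\sgn$ the signed symmetrizer $\sum_\sigma\sgn(\sigma)\sigma$ acts exactly as the \emph{unsigned} symmetrizer $\sum_\sigma\sigma$ does on $M^{(n^d)}$, so transporting $e_T$ along the isomorphism produces precisely $\colsym(S)\cdot S$; hence $\ipo(d)\cong\specht(n^d)\otimes\sgn=\specht(d^n)$. (That $\ipo(d)\neq0$ is also clear directly: for a rectangular tableau the row and column stabilizers meet only in the identity, so the $\sigma S$, $\sigma\in\colgp(S)$, are pairwise distinct and $\colsym(S)S$ is a nonzero sum of distinct basis monomials.) For Part 3, I would first note $\kpo(d)$ is a submodule since the span of the elements $T-\pi_{j,k}(T)$ is unchanged by relabelling the entries. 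The straightening law already quoted gives $\dim\posetring(d)/\kpo(d)=\dim\specht(d^n)$, and since $\kpo(d)$ sits inside the kernel of the evaluation $S\mapsto\por{\text{rows of }S}$ while standard tableaux give linearly independent bideterminants (as used in the proof of Proposition~\ref{prop:sft}), that kernel equals $\kpo(d)$. Because $\ipo(d)$ is irreducible, to get $\ipo(d)\cap\kpo(d)=0$ it suffices to exhibit one $\colsym(S)S$ with nonzero evaluation: taking $S$ super-standard and substituting $v_{(j-1)n+k}=t^{\,j}e_k$ makes every $\por{\text{rows of }\sigma S}$ equal $t^{\,nd(d+1)/2}$, so the evaluation is $(d!)^n t^{\,nd(d+1)/2}\neq0$. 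Semisimplicity and multiplicity one of $\specht(d^n)$ then give $\posetring(d)=\ipo(d)\oplus\kpo(d)$ and $\kpo(d)\cong\bigoplus_{\lambda\lhd d^n}\specht(\lambda)^{\oplus c_\lambda}$.

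\textbf{Expected obstacle.} No step is deep; the danger is entirely in the bookkeeping of conventions. The thing I expect to have to be most careful about is that the single sign twist $M^{(n^d)}\rightsquigarrow M^{(n^d)}\otimes\sgn$ simultaneously does the three jobs it must do -- turning $\specht(\lambda)$ into $\specht(\conjpart{\lambda})$, converting Fulton's signed column symmetrizers into the unsigned $\colsym(S)$, and reversing dominance so the index set becomes $\lambda\trianglelefteq d^n$ -- and that the transpose relating our $\posetring(d)$ to Fulton's $\tilde M^\lambda$ is taken in the right direction. I would check all of this against $d=1$ and $(d,n)=(2,1)$ before trusting it.
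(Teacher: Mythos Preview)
The paper does not give its own proof of this fact; it is stated with a citation to Fulton~\cite[Chap.~7.4]{Fulton} and the footnote explaining the transpose convention. Your proposal correctly carries out exactly the translation the paper leaves implicit: identifying $\posetring(d)$ with $M^{(n^d)}\otimes\sgn$, applying Young's rule and the sign twist $\specht(\lambda)\otimes\sgn\cong\specht(\conjpart{\lambda})$ for Part~1, transporting polytabloids across the twist for Part~2, and combining straightening with a nonvanishing evaluation for Part~3. One small slip to fix: for shape $n^d$ (that is, $d$ rows of length $n$), the column stabilizer $C_T$ is $\sg_d^{\times n}$, not $\sg_n^{\times d}$; this matches the paper's $\colgp(S)$ for a $d\times n$ tableau and does not affect your argument.
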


We also observe the following adaptation of James' submodule theorem to column 
tabloid modules.

\begin{fact}[{See e.g. \cite[Theorem 2.4.4]{Sagan}}]\label{fact:submodule}
Let $\beta'$ be the bilinear form on $\posetring(d)$ by setting the monomials as 
an 
orthonormal basis. Then $\kpo(d)$ and $\ipo(d)$ are orthogonal complement of each 
other under $\beta'$.
\end{fact}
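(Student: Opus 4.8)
The statement asserts that $\kpo(d)$ and $\ipo(d)$ are orthogonal complements inside $\posetring(d)$ under the form $\beta'$ that declares monomials orthonormal. Since $\posetring(d)$ decomposes (Fact~\ref{fact:column_tabloid}(1)) as $\ipo(d)\oplus\kpo(d)$ with $\ipo(d)\cong\specht(d^n)$ and $\kpo(d)$ the sum of the other isotypic pieces, a dimension count already gives $\dim\ipo(d)+\dim\kpo(d)=\dim\posetring(d)$. So the only thing to verify is that the two submodules are mutually orthogonal under $\beta'$; orthogonal complement then follows from the dimension identity (and nondegeneracy of $\beta'$, which is clear since the monomials form an orthonormal basis).

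The plan is to reduce the claim to James' submodule theorem in its standard form for the row tabloid (permutation) module $M^{\lambda}$ with $\lambda=n^d$, where it is classical that the Specht module $\specht(\widetilde{\lambda})$ and the kernel of the $\sg_{dn}$-invariant bilinear form intersect in the radical, and more precisely that $M^\lambda$ admits an orthogonal decomposition into $\specht(\widetilde\lambda)$-part and the ``non-leading'' part. Concretely, $\posetring(d)$ is the column tabloid module which, as noted in the footnote to Fact~\ref{fact:column_tabloid}, is the dual of the row tabloid module $M^{n^d}$; the form $\beta'$ on $\posetring(d)$ corresponds, under this duality, to the canonical $\sg_{dn}$-invariant form on $M^{n^d}$. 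Under that identification $\ipo(d)$ corresponds to the copy of $\specht(\widetilde{n^d})=\specht(d^n)$ generated by the polytabloid (column symmetrizer) elements, and $\kpo(d)$ corresponds to the orthogonal complement of that Specht submodule inside $M^{n^d}$. Thus I would first make the duality/identification of forms precise (one short lemma: $\beta'$ is, up to the obvious pairing, the invariant form on $M^{n^d}$, because both are $\sg_{dn}$-invariant, and an $\sg_{dn}$-invariant form on a transitive permutation module is unique up to scalar), and then quote the submodule theorem (e.g. \cite[Theorem 2.4.4]{Sagan}) to conclude.

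Alternatively — and this may be cleaner to write — I would argue directly without invoking duality. Take $v\in\ipo(d)$, say $v=\colsym(S)\cdot S$ for a monomial $S$, and $u\in\kpo(d)$. Because $\kpo(d)$ is spanned by elements $T-\pi_{j,k}(T)$, it suffices to show $\beta'(\colsym(S)\cdot S,\ T-\pi_{j,k}(T))=0$ for all monomials $T$ and all $j,k$. Using that $\beta'$ is $\colgp(S)$-invariant (permuting within columns just permutes monomials) one rewrites $\beta'(\colsym(S)\cdot S,\,w)=\beta'(S,\,\colsym(S)\cdot w)$, so the claim reduces to showing that $\colsym(S)$ annihilates every $T-\pi_{j,k}(T)$ — equivalently, that the column-symmetrized image of $\pi_{j,k}(T)$ equals that of $T$. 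This is a pigeonhole/sign-cancellation computation of exactly the type that proves the Garnir relations lie in the kernel of column symmetrization: switching $k\le n$ entries between adjacent rows of a tableau with $n$ rows forces, after summing and symmetrizing over columns, a repeated entry in some column of each surviving term, hence $0$ by the anti-symmetry of the $\po{\cdot}$ symbols. Conversely, equality of dimensions (Fact~\ref{fact:column_tabloid}) shows the orthogonality is tight, giving the orthogonal-complement statement.

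The main obstacle is the bookkeeping in the direct approach: one must check carefully that ``switch the first $k$ entries of row $j+1$ with $k$ entries of row $j$, sum over all choices'' — i.e. the operator $\pi_{j,k}$ from Fulton — is precisely the operator that column symmetrization kills, and that the relevant invariance of $\beta'$ under $\colgp(S)$ is being used on the correct side. If this gets unwieldy, the fallback is the duality route, where the entire content is: (i) identify $\posetring(d)$ with the dual of $M^{n^d}$ compatibly with the two natural forms, and (ii) cite James' submodule theorem verbatim; there the only real work is (i), and it is routine once one fixes the pairing $\langle \po{\text{columns}},\ \text{row tabloid}\rangle$ given by counting sign-weighted incidences.
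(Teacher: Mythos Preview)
The paper does not prove this statement at all: it is stated as a \emph{Fact} with a citation to Sagan, Theorem~2.4.4, and no argument is given. So there is no ``paper's own proof'' to compare against; your proposal supplies proofs where the paper simply appeals to the literature.

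Both of your strategies are sound in outline. The cleanest route, in fact, is a slight sharpening of your duality idea that avoids any explicit computation: by Fact~\ref{fact:column_tabloid}, $\posetring(d)=\ipo(d)\oplus\kpo(d)$ with $\ipo(d)\cong\specht(d^n)$ irreducible and $\kpo(d)$ containing no $\specht(d^n)$-constituent; you have already checked that $\beta'$ is $\sg_{dn}$-invariant. An invariant form on $\ipo(d)\times\kpo(d)$ is the same data as an $\F\sg_{dn}$-map $\ipo(d)\to\kpo(d)^{*}\cong\kpo(d)$ (self-duality in characteristic~$0$), and Schur's lemma forces this to vanish. Together with the dimension identity you already noted, this gives the orthogonal-complement statement immediately.

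Your direct approach is also viable, but the sketch has a row/column slip that would trip up a careful write-up. In this paper's conventions the tableaux are $d\times n$, the $\po{\cdot}$ symbols are anti-symmetric along \emph{rows}, and $\pi_{j,k}$ exchanges entries between adjacent \emph{rows}; so the cancellation you want comes from a repeated entry in a \emph{row}, not a column. Also, ``$\colsym(S)$ annihilates every $T-\pi_{j,k}(T)$'' is stronger than what you actually need and is not literally true for arbitrary $S$; what the James-style argument gives is that $\colsym(S)\cdot U$ is always a scalar multiple of $\colsym(S)\cdot S$ for any monomial $U$, which is enough to make the $S$-coefficient of $\colsym(S)\cdot(T-\pi_{j,k}(T))$ vanish. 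If you pursue this route, state and prove that key lemma (the column-tabloid analogue of the sign lemma in Sagan) rather than the stronger annihilation claim.
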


Thus, as an $\F \sg_{dn}$ bimodule, $\posetring(d)\otimes \pair{\posetring}(d)$ is 
the tensor product of two column tabloid modules. Its structure is then easily 
deduced from Fact~\ref{fact:column_tabloid} and~\ref{fact:submodule}.

\begin{fact}\label{fact:bimodule}
\begin{enumerate}
\item As an $\F\sg_{dn}$ bimodule, $\posetring(d)\otimes\pair{\posetring}(d)$ 
decomposes as 
$$
(\specht(d^n)\otimes\specht(d^n)) \oplus \big(\bigoplus_{\lambda, \nu\vdash 
dn}(\specht(\lambda)\otimes\specht(\nu))^{\oplus c_\lambda\cdot c_\nu}\big),
$$
where $\lambda$ and $\nu$ are dominated by $d^n$, and at least one of $\lambda$ 
and $\nu$ is strictly dominated by $d^n$. $c_\lambda$ 
and $c_\nu$ are Kostka numbers as in Fact~\ref{fact:column_tabloid}.
\item $\ker(\phi)=\langle \kpo(d)\otimes\pair{\posetring}(d)\cup 
\posetring(d)\otimes \pair{\kpo}(d)\rangle$ is an $\F\sg_{dn}$ sub-bimodule, 
isomorphic to the summand $\bigoplus_{\lambda, \nu\vdash 
dn}(\specht(\lambda)\otimes\specht(\nu))^{\oplus c_\lambda\cdot c_\nu}$, as 
above.
\item Let $\beta$ be the bilinear form on 
$\posetring(d)\otimes\pair{\posetring}(d)$ 
by setting $S\otimes \pair{T}$, where $S$ and $\pair{T}$ are monomials in 
$\posetring(d)$ and $\pair{\posetring}(d)$, respectively. Then 
$\ipo(d)\otimes\pair{\ipo}(d)$ and $\langle \posetring(d)\otimes \pair{\kpo}(d) 
\cup \kpo(d)\otimes\pair{\posetring}(d)\rangle$ are orthogonal complement of each 
other under $\beta$.
\end{enumerate}
\end{fact}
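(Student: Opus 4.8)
The plan is to deduce Fact~\ref{fact:bimodule} from the two already-established structural facts about the single column tabloid module (Fact~\ref{fact:column_tabloid} and Fact~\ref{fact:submodule}) by taking tensor products, together with the description of $\ker(\phi)$ from Proposition~\ref{prop:sft}. The three parts are of increasing subtlety, so I would handle them in order, reusing earlier parts.

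\medskip
\noindent\textbf{Part (1).} First I would observe that $\posetring(d)\otimes\pair{\posetring}(d)$ is, by construction, the external tensor product of two copies of the column tabloid module, one acted on by the left $\F\sg_{dn}$ and one by the right. Over a field of characteristic $0$ the group algebra $\F\sg_{dn}$ is semisimple, so the decomposition of an $\F\sg_{dn}\otimes\F\sg_{dn}$-bimodule is obtained by tensoring the constituents: if $\posetring(d)\cong\bigoplus_\lambda\specht(\lambda)^{\oplus a_\lambda}$ with $a_{d^n}=1$ and $a_\lambda=c_\lambda$ for $\lambda\lhd d^n$ (Fact~\ref{fact:column_tabloid}(1)), then $\posetring(d)\otimes\pair{\posetring}(d)\cong\bigoplus_{\lambda,\nu}(\specht(\lambda)\otimes\specht(\nu))^{\oplus a_\lambda a_\nu}$. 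Splitting off the unique summand with $\lambda=\nu=d^n$ (multiplicity $1\cdot 1$) from all the rest (where at least one of $\lambda,\nu$ is strictly dominated, so the multiplicity is $c_\lambda c_\nu$ with the convention $c_{d^n}=1$) gives exactly the stated formula. This step is essentially bookkeeping once semisimplicity is invoked.

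\medskip
\noindent\textbf{Part (2).} Here I would combine Proposition~\ref{prop:sft} with Fact~\ref{fact:column_tabloid}(3). We know $\ker(\phi)=\langle\kpo(d)\otimes\pair{\posetring}(d)\cup\posetring(d)\otimes\pair{\kpo}(d)\rangle$, and from Fact~\ref{fact:column_tabloid}(1)--(3) that as $\F\sg_{dn}$-modules $\posetring(d)=\ipo(d)\oplus\kpo(d)$ with $\ipo(d)\cong\specht(d^n)$ and $\kpo(d)\cong\bigoplus_{\lambda\lhd d^n}\specht(\lambda)^{\oplus c_\lambda}$. Tensoring these direct-sum decompositions, $\posetring(d)\otimes\pair{\posetring}(d)$ splits as a bimodule into the four pieces $\ipo\otimes\pair{\ipo}$, $\ipo\otimes\pair{\kpo}$, $\kpo\otimes\pair{\ipo}$, $\kpo\otimes\pair{\kpo}$; the span $\langle\kpo(d)\otimes\pair{\posetring}(d)\cup\posetring(d)\otimes\pair{\kpo}(d)\rangle$ is precisely the sum of the last three (the three in which at least one tensor factor is $\kpo$), which as a bimodule is $\bigoplus_{\lambda,\nu}(\specht(\lambda)\otimes\specht(\nu))^{\oplus c_\lambda c_\nu}$ with at least one of $\lambda,\nu$ strictly dominated by $d^n$ — matching the non-trivial summand in Part (1). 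The one point needing care is that $\langle\kpo\otimes\pair{\posetring}\cup\posetring\otimes\pair{\kpo}\rangle$ really equals $\kpo\otimes\pair{\kpo}\oplus\kpo\otimes\pair{\ipo}\oplus\ipo\otimes\pair{\kpo}$ and not something larger; this follows from distributing the tensor product over the direct sums $\posetring=\ipo\oplus\kpo$, $\pair{\posetring}=\pair{\ipo}\oplus\pair{\kpo}$.

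\medskip
\noindent\textbf{Part (3) — the main obstacle.} This is the part requiring the most thought, because it is a statement about a \emph{specific} bilinear form (monomials orthonormal), not just about isomorphism types. I would argue as follows. Fact~\ref{fact:submodule} says $\ipo(d)$ and $\kpo(d)$ are orthogonal complements under the monomial form $\beta'$ on $\posetring(d)$, and similarly $\pair{\ipo}(d)$, $\pair{\kpo}(d)$ under $\pair{\beta'}$. The form $\beta$ on the tensor product with $S\otimes\pair{T}$ orthonormal is exactly the tensor-product form $\beta'\otimes\pair{\beta'}$. For tensor-product inner product spaces, the orthogonal complement of $U_1\otimes U_2$ inside $A\otimes B$ (when $A=U_1\oplus U_1^\perp$, $B=U_2\oplus U_2^\perp$) is $U_1^\perp\otimes B + A\otimes U_2^\perp$, i.e. the span of the three "off-diagonal" pieces. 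Applying this with $U_1=\ipo(d)$, $U_2=\pair{\ipo}(d)$ gives that $(\ipo(d)\otimes\pair{\ipo}(d))^\perp=\langle\kpo(d)\otimes\pair{\posetring}(d)\cup\posetring(d)\otimes\pair{\kpo}(d)\rangle$, which is the claim. The subtlety — and where I'd spend the care — is verifying the elementary linear-algebra fact about orthogonal complements in tensor products of inner product spaces: one checks $\ipo\otimes\pair{\ipo}$ is orthogonal to each of the three off-diagonal blocks using that $\beta=\beta'\otimes\pair{\beta'}$ evaluates on decomposable tensors as a product, and then a dimension count (all four blocks are mutually orthogonal and together span the whole space) shows there is nothing else in the complement. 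Since $\dim\ipo+\dim\kpo=\dim\posetring$ and likewise on the hatted side, the dimensions add up, completing the proof.
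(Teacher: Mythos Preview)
Your proposal is correct and follows exactly the approach the paper intends: the paper does not give a proof of this Fact, stating only that ``its structure is then easily deduced from Fact~\ref{fact:column_tabloid} and~\ref{fact:submodule},'' and your three parts supply precisely those deductions (tensor the semisimple decompositions for (1), distribute $\posetring=\ipo\oplus\kpo$ across the tensor for (2), and tensor the bilinear forms for (3)).
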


\section{The diagonal action of $\sg_{dn}$}\label{subsec:diag}

In this subsection we consider the diagonal action of 
$\sg_{dn}$ on $\posetring(d)\otimes\pair{\posetring}(d)$. That is, 
$\sigma\in\sg_{dn}$ acts on 
$\posetring(d)\otimes\pair{\posetring}(d)$ as $(\sigma, \sigma)\in 
\sg_{dn}\times\sg_{dn}$. 

To understand this structure we introduce a combinatorial structure called the 
\emph{correlated tableaux}. A correlated tableau $C$ of size $d\times d$ is a 
$d\times d$ matrix whose entries are subsets of $[dn]$, satisfying: (1) $C(i,j)$, 
$i, j\in[d]$ form a partition of $[dn]$; (2) $\forall i\in[d]$, $|\cup_{j\in[d]} 
C(i, j)|=n$; (3) $\forall j\in[d]$, $|\cup_{i\in[d]} C(i,j)|=n$.

Correlated tableaux are in 1-to-1 correspondence with $S\otimes\pair{T}$ where $S$ 
and $\pair{T}$ are monomials from $\posetring(d)$ and 
$\pair{\posetring}(d)$, respectively. Given $S\otimes\pair{T}$, we can form a 
correlated tableau $C$ by setting $C(i, j)=\{k\in[nd]\mid k\in S_i \text{ and } 
\pair{k}\in T_j\}$. Given a correlated tableau $C$, by reading along the rows from 
top to bottom we get a monomial $S$, and by reading along the columns from left to 
right (and adding $\pair{\ \cdot\ }$) we get a monomial $\pair{T}$.

Let us set up the following convention about arranging the rows of $S$ (resp. 
$\pair{T}$) as 
in $S\otimes\pair{T}$, where $S$ and $\pair{T}$ are understood as tableaux as 
shown in Equation~\ref{eq:tab}. In the literature, it 
is more common to arrange each row of $S$ to be increasing, 
namely $i_{j,1}<i_{j,2}<\dots <i_{j,n}$ for every $j$. But when $S$ appears in 
$S\otimes\pair{T}$ we shall order each row while taking into consideration of 
$\pair{T}$ as well. That is, first form the correlated tableau $C$ based on 
$S\otimes \pair{T}$, and then read each row of $C$ from left to right. If there 
are more than $1$ elements in an entry, then use the increasing order. The rows of 
$\pair{T}$ are also ordered in a similar way. 

As an example, consider 
$$
S=\begin{vmatrix}
1 & 2 & 4 \\
  3 & 5 & 6 \\
  7 & 8 & 9
\end{vmatrix}, \pair{T}=
\begin{vmatrix}
\pair{1} & \pair{2} & \pair{7} \\
  \pair{3} & \pair{4} & \pair{5} \\
  \pair{6} & \pair{8} & \pair{9}
\end{vmatrix}.
$$
The correlated tableaux then is $C=$
\begin{tabular}{|>{$}c<{$}|>{$}c<{$}|>{$}c<{$}|}
\hline \{1,2\} & \{4\} & \emptyset \\ 
\hline \emptyset & \{3,5\} & \{6\} \\ 
\hline \{7\} & \emptyset & \{8,9\} \\ 
\hline 
\end{tabular}.
So we obtain by reading $C$
$$
S'=\begin{vmatrix}
1 & 2 & 4 \\
  3 & 5 & 6 \\
  7 & 8 & 9
\end{vmatrix}, \pair{T'}=
\begin{vmatrix}
\pair{1} & \pair{2} & \pair{7} \\
  \pair{4} & \pair{3} & \pair{5} \\
  \pair{6} & \pair{8} & \pair{9}
\end{vmatrix}.
$$
That is $\pair{T'}=-\pair{T}$.

Having defined correlated tableaux, we define the $(d, n)$-\emph{correlated 
diagram} as a $d\times d$ matrix $D$ over $\N$ with row and column sums 
being $n$. For a correlated diagram $D$, we form its \emph{diagonal 
partition}
$\diagpart(D)$, by arranging the entries of $D$ to be non-increasing to get a 
partition of $dn$. Each correlated tableau $C$ gives rise to a correlated diagram 
$D_C$ by setting $D_C(i,j)=|C(i,j)|$; we all call $D_C$ the shape of $C$. Using 
the example as above, $D_C=$
\begin{tabular}{|>{$}c<{$}|>{$}c<{$}|>{$}c<{$}|}
\hline 2 & 1 & 0 \\ 
\hline 0 & 2 & 1 \\ 
\hline 1 & 0 & 2 \\ 
\hline 
\end{tabular}. $\diagpart(D_C)$ then is $(2, 2, 2, 1, 1, 1)$.

Fix a correlated diagram $D$, and let $\ctab(D)$ be the subspace of 
$\posetring(d)$, spanned by all correlated tableaux of shape $D$. $\ctab(D)$ is 
clearly a submodule of $\posetring(d)$ under the diagonal action of $\sg_{dn}$. 
Let us define the \emph{row tabloid module} of $\F\sg_{s}$, $s\in\Z^+$. For a 
partition $\lambda\vdash s$, row tabloids of shape $\lambda$ are fillings of the 
Young diagram of shape $\lambda$ by $[s]$ without repetitions, with the following 
equivalence relation: two row tabloids $P$ and $Q$ are equivalent if the 
corresponding rows have the same entries. Row tabloids admit an $\F\sg_{s}$ action 
by permuting the entries. The linear span of all row tabloids of shape $\lambda$ 
is then an $\F\sg_{s}$ module and denoted as $\rowtab(\lambda)$. The decomposition 
of the row tabloid 
module is well-known and quite similar to that of the column tabloid module as 
shown in Fact~\ref{fact:column_tabloid}; see \cite[Chap. 7.2]{Fulton}.

\begin{lemma}
As an $\F\sg_{dn}$ module, $\ctab(D)$ is isomorphic to 
the row tabloid module $\rowtab(\diagpart(D))$. 
\end{lemma}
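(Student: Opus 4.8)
The plan is to exhibit an explicit $\F\sg_{dn}$-module isomorphism between $\ctab(D)$ and $\rowtab(\diagpart(D))$ by matching up the natural generating sets on both sides. First I would unwind what a correlated tableau of shape $D$ really is: an element of $\ctab(D)$ is (up to sign, coming from the anti-symmetry conventions) a way of partitioning $[dn]$ into boxes, where box $(i,j)$ receives exactly $D(i,j)$ elements, with row sums and column sums equal to $n$. Forgetting the \emph{column} index $j$ and only remembering, for each row $i$, the set $\cup_j C(i,j)$ of size $n$, turns $C$ into a filling of the Young diagram of the partition obtained by sorting the rows of $D$ — but to get a genuine row tabloid of shape $\diagpart(D)$ one must instead sort the \emph{entries} $D(i,j)$ of the whole matrix into non-increasing order and remember which element of $[dn]$ sits in which box. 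So the map $\Psi$ I would define sends a correlated tableau $C$ to the row tabloid of shape $\diagpart(D)$ whose boxes are filled by the sets $C(i,j)$ listed in the order prescribed by the sorting of $D$'s entries (breaking ties by a fixed rule), each $n$-element row-block of $\diagpart(D)$ being assembled so as to respect the row structure of $D$.

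The key steps, in order, are: (i) check that $\Psi$ is well-defined on the quotient, i.e. that the anti-symmetry relations inside $\posetring(d)$ (permuting entries within a single Plücker bracket $\po{\cdot}$ changes the sign) and within $\pair{\posetring}(d)$ are exactly absorbed by the row-tabloid equivalence relation (two tabloids equal when the rows carry the same sets) together with the conventions fixed in Section~\ref{subsec:diag} for ordering the rows of $S$ and $\pair T$ in $S\otimes\pair T$; (ii) check $\sg_{dn}$-equivariance — this is essentially immediate, because $\sigma\in\sg_{dn}$ acts diagonally by relabelling the elements of $[dn]$ inside every box of $C$, and the row-tabloid action does the same relabelling inside every box of the tabloid, and $\Psi$ was defined purely in terms of which element goes in which box; (iii) construct the inverse map: given a row tabloid of shape $\diagpart(D)$, split each of its rows of size $n$ back into the pieces dictated by the entries of $D$ (the $i$th row of $D$, after sorting, tells us how to cut the $i$th size-$n$ block into at most $d$ pieces), reassemble a $d\times d$ array of sets, and read off the corresponding $S\otimes\pair T$; (iv) conclude that $\Psi$ is a bijection on spanning sets that is compatible with all relations, hence the desired module isomorphism.

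The main obstacle I anticipate is step (i), the bookkeeping of signs and of the ordering conventions. A correlated tableau does not canonically determine a point of $\posetring(d)\otimes\pair{\posetring}(d)$ but only one up to sign (the example with $S'\otimes\pair{T'}=-S\otimes\pair T$ in the text is exactly this phenomenon), whereas a row tabloid is an honest unsigned object. So one must be careful that $\ctab(D)$, \emph{as a subspace of $\posetring(d)$}, is spanned by correlated tableaux subject to precisely the sign relations that make the quotient an unsigned row-tabloid module — equivalently, that the conventions of Section~\ref{subsec:diag} pin down a consistent choice of sign per box so that the equivalence "rows carry the same sets" on the tabloid side corresponds to "same element of $\posetring(d)$" on the other side. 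Once the combinatorial dictionary (correlated tableaux of shape $D$ $\leftrightarrow$ row tabloids of shape $\diagpart(D)$, with the rows of $D$ recording the cut-up pattern) is set up cleanly and the sign issue is dispatched, equivariance and bijectivity are formal, and it is perhaps cleanest to observe directly that both modules are the permutation module $\F\otimes_{\F H}\F\sg_{dn}$ induced from the Young subgroup $H$ stabilising the cell structure of $D$ (for $\ctab(D)$) respectively of $\diagpart(D)$ (for $\rowtab(\diagpart(D))$), and that these two Young subgroups are conjugate — indeed equal after relabelling — because $D$ and $\diagpart(D)$ have the same multiset of entries.
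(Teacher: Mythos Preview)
Your approach is essentially the paper's: send each cell $C(i,j)$ to a row of a tabloid of shape $\diagpart(D)$ (listed in some fixed non-increasing-size order) and observe that swapping two elements in the same cell produces a $-1$ in $S$ and a $-1$ in $\pair{T}$, hence $+1$, which is exactly the row-tabloid relation --- this one observation is the entirety of the paper's argument, and your closing Young-subgroup remark (both sides are induced from $\prod_{i,j}\sg_{D(i,j)}$) is the cleanest way to package it. One small confusion to fix: the parts of $\diagpart(D)$ are the individual entries $D(i,j)$, not $n$, so there are no ``$n$-element row-blocks'' and your description of the inverse in step (iii) does not type-check as written; once corrected (each row of the tabloid \emph{is} a single cell, no further splitting needed), the inverse is immediate.
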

\begin{proof}
To set up a linear map between $\ctab(D)$ and $\rowtab(\diagpart(D))$, arrange the 
entries of $C\in\ctab(D)$ following an arbitrary but fixed 
order as long as this order maintains the sizes of the entries to be 
non-increasing. This gives a row tabloid in $\rowtab(\diagpart(D))$, and then 
extend by linearity. To see that the actions of $\sg_{dn}$ are compatible, note 
that by our convention of sending correlated tableaux to monomials, the diagonal 
action of $\sg_{dn}$ is just permuting the entries in the correlated tableau. In 
particular, if $i$ and $j$ are in the same entry of $C$, then switching 
$i$ and $j$ leaves $C$ unchanged, as two $-1$ are produced from the two monomials 
associated with $C$. 
\end{proof}

\begin{proposition}
As an $\F\sg_{dn}$ module, $\posetring(d)\otimes\pair{\posetring}(d)$ is 
isomorphic to $\oplus_{D}\rowtab(\diagpart(D))$, where $D$ runs over all $(d, 
n)$-correlated diagrams.
\end{proposition}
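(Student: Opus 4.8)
The plan is to show that the correlated tableaux of all possible shapes partition the standard basis of $\posetring(d)\otimes\pair{\posetring}(d)$, and that each shape class spans a diagonal submodule isomorphic to the corresponding row tabloid module. First I would recall from the preceding discussion that correlated tableaux are in bijection with the basis $\{S\otimes\pair{T}\}$ of $\posetring(d)\otimes\pair{\posetring}(d)$, where $S$ and $\pair{T}$ range over monomials of $\posetring(d)$ and $\pair{\posetring}(d)$ respectively (subject to the sign ambiguity coming from the anti-symmetry relations, which I would normalize away by fixing the row-ordering convention described in the text, so that each basis element has a canonical representative). Consequently the set of all correlated tableaux, grouped by shape, yields a direct sum decomposition $\posetring(d)\otimes\pair{\posetring}(d)=\bigoplus_D \ctab(D)$ as $\F$-vector spaces, where $D$ ranges over all $(d,n)$-correlated diagrams.

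Next I would check that this decomposition is a decomposition of $\F\sg_{dn}$-modules under the diagonal action. The key observation, already recorded in the proof of the preceding lemma, is that under the correspondence between correlated tableaux and monomials, the diagonal action of $\sigma\in\sg_{dn}$ is simply the relabelling of the entries of the correlated tableau (with the understanding that swapping two symbols lying in the same cell $C(i,j)$ fixes $C\otimes$-wise, since the two sign changes coming from $S$ and $\pair{T}$ cancel). Since relabelling the entries of a correlated tableau of shape $D$ produces another correlated tableau of the same shape $D$, each $\ctab(D)$ is $\sg_{dn}$-stable, so the vector-space decomposition is a module decomposition. Then the previous lemma identifies each summand $\ctab(D)$ with $\rowtab(\diagpart(D))$, and assembling these isomorphisms gives $\posetring(d)\otimes\pair{\posetring}(d)\cong\bigoplus_D\rowtab(\diagpart(D))$, as claimed.

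The one point that needs genuine care — and which I expect to be the main (albeit mild) obstacle — is making sure the bijection between basis elements and correlated tableaux is well-defined on the nose, i.e. that the sign/ordering issue is handled consistently. A monomial $S\otimes\pair{T}$ is only defined up to sign because of the anti-symmetry of the Plücker-type symbols, and the example in the text (where reordering rows of $\pair{T}$ flips a sign) shows this must be addressed. I would resolve it by adopting the stated convention: form the correlated tableau from $S\otimes\pair{T}$, then re-read the rows of $S$ and the columns of $\pair{T}$ left-to-right (resp.\ top-to-bottom), ordering multi-element cells increasingly; this picks out a canonical signed representative, so the map $S\otimes\pair{T}\mapsto C$ and its inverse $C\mapsto S\otimes\pair{T}$ are genuinely inverse bijections between the chosen basis and the set of correlated tableaux. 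With that pinned down, everything else is bookkeeping: each correlated diagram $D$ has a nonempty class $\ctab(D)$ (take any partition of $[dn]$ into cells of the prescribed sizes), distinct shapes give disjoint classes, and the union of all classes exhausts the basis, so the direct sum is complete.
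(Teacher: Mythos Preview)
Your proposal is correct and follows exactly the approach implicit in the paper: the proposition is stated there without proof, as an immediate consequence of the preceding lemma once one notes that the correlated tableaux, grouped by shape, partition a basis of $\posetring(d)\otimes\pair{\posetring}(d)$ into diagonal-stable pieces. Your care with the sign/ordering convention is appropriate and matches the convention the paper sets up just before the lemma.
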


We also note that the well-known Robinson-Schensted-Knuth correspondence gives a 
1-to-1 correspondence between $(d, n)$-correlated diagrams and pairs of 
semistandard tableaux with the same shape $\lambda$, and the same content as 
$n^d$. Note that $\lambda$ necessarily dominates $n^d$ to satisfy the 
semistandard condition. On the other hand, such pairs of semistandard tableaux  
could be used to index those $\F\sg_{dn}$ sub-bimodules of 
$\posetring(d)\otimes\pair{\posetring}(d)$ of isomorphic type 
$\specht(\conjpart{\lambda})\otimes\specht(\conjpart{\lambda})$. This follows by 
adapting the results in \cite[Chap. 2.10]{Sagan} to the column tabloid setting. 

Recall that $\phi$ realizes $\posetring(d)\otimes\pair{\posetring}(d)$ as 
functions on $(V\otimes W)^{\oplus dn}$, as defined in 
Equation~\ref{eq:phi}. We now describe another set of vectors in $\ker(\phi)$ 
based on the $\F\sg_{dn}$ 
structure. Given a correlated tableau $C$, suppose $C$ has $\geq n^2+1$ nonempty 
entries. Note that this requires $d\geq n+1$. Fix $N:=n^2+1$ entries and from each 
entry choose a number, denoted as $I=\{i_1, \dots, i_{n^2+1}\}$. Then consider the 
following vector in $\posetring(d)\otimes\pair{\posetring}(d)$:
\begin{equation}\label{eq:alt}
\alt(C, I)=\sum_{\sigma\in \sg_I}\sgn(\sigma) C^{\sigma},
\end{equation}
where $\sigma$ acts on $C$ as in the diagonal action of $\sg_{dn}$. Then 
$\phi(\alt(C, I))$ is a zero function on $(V\otimes W)^{\oplus dn}$, since it is 
alternating in $n^2+1$ copies of $V\otimes W$, an $n^2$-dimensional vector space. 
Note that construction is very natural in light of the identification of 
$\ctab(D)$ with $\rowtab(\diagpart(D))$.

\section{Towards proving degree bounds}\label{sec:deg}

\paragraph{Reduction to the multilinear case.} Over characteristic $0$ fields, we 
can also reduce the degree bound problem to the multilinear case as well. 
Consider a graded invariant ring $R$, for which we want to 
prove that $R$ is generated as a ring by $R_{\leq b}=\{f\in R\mid \deg(f)\leq 
b\}$. If $R$ is generated by $R_{\leq b}$, then any homogeneous multilinear 
invariant $f$ of degree $b'>b$ is equal to $\sum_i\prod_{j} f_{i,j}$ where 
$b\geq \deg(f_{i,j})\geq 1$. On the other hand, suppose each multilinear invariant 
of degree $b'>b$ can be written as such. 
Take any homogeneous $g\in R$ of degree $b'>b$, fully polarize it to get a 
multilinear $f$, 
which then by assumption can be written as $\sum_i\prod_{j} f_{i,j}$ where 
$b\geq \deg(f_{i,j})\geq 1$. Then restitute back to get $b'! g$ on one hand, 
and on the other hand $\sum_i\prod_{j} g_{i,j}$, where $g_{i,j}$ are obtained via 
restitution to $f_{i,j}$. So $b\geq \deg(g_{i,j})=\deg(f_{i,j})\geq 1$, which 
implies that $R$ is generated by $R_{\leq b}$.

\paragraph{Some reducible polynomials in 
$\posetring(d)\otimes\pair{\posetring}(d)$.} Recall that 
$\posetring(d)\otimes\pair{\posetring}(d)$ is spanned by $S\otimes \pair{T}$, 
where $S$ and $\pair{T}$ are monomials; or equivalently, by correlated tableaux 
$C$. We now view $C$ as a bipartite graph on $(L\cup R, E)$ allowing multiple 
edges, where $L=R=[d]$, and there are $k$ edges between $i$ and $j$ if 
$C(i,j)=k$. 

We shall prove that $\phi(C)$ is reducible if and only if this graph is disconnected. To show this, we recall the matrix-theoretic 
interpretation of $\phi(C)$ as in \cite{ANK07}. Let $A_1, \dots, A_s$ be $s$ 
variable matrices of size $s\times s$, and $y_1, \dots, y_s$ are newly introduced 
variables. The \emph{mixed discriminant} $\mdisc$ is a polynomial in $s^3$ 
variables in $A_i$'s, defined as follows. Given $\sigma\in\sg_s$, let $A^\sigma$ 
be the matrix whose $i$th column is the $i$th column of $A_{\sigma(i)}$. Then 
$\mdisc(A_1, \dots, A_s)=\sum_{\sigma\in\sg_s}\det(A^\sigma)$.

Given a correlated tableau $C$, we form $dn$ matrices of size 
$dn\times dn$ $Y_1, \dots, Y_{dn}$ as follows. For every $i$, $Y_i$ is viewed a 
$d\times 
d$ block matrix, where each block is of size $n\times n$. If $i$ appears in the 
$(j, k)$th position of $C$, then $Y_i$ has $X_i$ at the $(j, k)$ block, and $0$ 
everywhere else. 
In particular note that by construction $\mdisc(Y_1, \dots, Y_{dn})$ is a 
multilinear polynomial in $X_i$'s.
\begin{theorem}[{\cite{ANK07}}]
$\phi(C)$ is equal to $\mdisc(Y_1, \dots, Y_{dn})$.
\end{theorem}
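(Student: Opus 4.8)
The plan is to verify the identity by multilinearity. Both sides are multilinear in $X_1,\dots,X_{dn}$: for $\mdisc(Y_1,\dots,Y_{dn})$ this is the remark made just above (each $Y_i$ is linear in $X_i$, and $\mdisc$ is multilinear in its matrix arguments), and for $\phi(C)$ it is immediate from the construction in Equation~\ref{eq:phi}. Hence it suffices to check equality on every rank-one tuple $(B_1,\dots,B_{dn})$ with $B_i=v_i\trans{w_i}$, since rank-one matrices span $M(n,\F)$; I identify such a $B_i$ with the simple tensor $v_i\otimes w_i\in V\otimes W$ and evaluate both sides there.

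On such a tuple, the definition of $\phi$ gives directly
\[
\phi(C)\bigl(v_1\otimes w_1,\dots,v_{dn}\otimes w_{dn}\bigr)=\prod_{j=1}^{d}\det\bigl[\,v_i:i\in\mathrm{row}_j(C)\,\bigr]\cdot\prod_{k=1}^{d}\det\bigl[\,w_i:i\in\mathrm{col}_k(C)\,\bigr],
\]
where $\mathrm{row}_j(C)$ and $\mathrm{col}_k(C)$ are the index sequences obtained by reading the $j$th row, respectively the $k$th column, of $C$ under the convention fixed in Section~\ref{subsec:diag}, and $\det[\,v_i:\dots\,]$ denotes the determinant of the $n\times n$ matrix with those columns in that order; indeed each $\por{\cdot}$ sees only the $v$- or the $w$-parts of the slots it names. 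For the right-hand side, observe that with $X_i=v_i\trans{w_i}$ the block matrix $Y_i$ itself has rank one: if $i$ sits in block $(j_i,k_i)$ of $C$, then $Y_i=\til{v}_i\trans{\til{w}_i}$ with $\til{v}_i=e_{j_i}\otimes v_i$ and $\til{w}_i=e_{k_i}\otimes w_i$ in $\F^{dn}=\F^{d}\otimes\F^{n}$. Collecting these columns into the $dn\times dn$ matrices $\til V=[\til{v}_1\mid\dots\mid\til{v}_{dn}]$ and $\til W=[\til{w}_1\mid\dots\mid\til{w}_{dn}]$ and introducing auxiliary variables $y_1,\dots,y_{dn}$, we get $\sum_i y_iY_i=\til V\cdot\mathrm{diag}(y_1,\dots,y_{dn})\cdot\trans{\til W}$, hence $\det(\sum_i y_iY_i)=(y_1\cdots y_{dn})\,\det\til V\,\det\til W$; extracting the coefficient of $y_1\cdots y_{dn}$ yields $\mdisc(Y_1,\dots,Y_{dn})=\det\til V\cdot\det\til W$ on this tuple.

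It remains to recognize $\det\til V$ and $\det\til W$ as the two Pl\"ucker products displayed above. The column $\til{v}_i$ of $\til V$ is supported only in block-row $j_i$, so after permuting the columns of $\til V$ into the order in which the indices are read off the rows of $C$, the matrix $\til V$ becomes block diagonal with $j$th diagonal block $[\,v_i:i\in\mathrm{row}_j(C)\,]$; thus $\det\til V=\pm\prod_j\det[\,v_i:i\in\mathrm{row}_j(C)\,]$, and symmetrically $\det\til W=\pm\prod_k\det[\,w_i:i\in\mathrm{col}_k(C)\,]$ after grouping the columns of $\til W$ by block-column. Substituting these back and comparing with the formula for $\phi(C)$ completes the proof. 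I expect the only step that is not a routine unwinding of definitions to be the bookkeeping of the two signs produced by these column regroupings — the parities of the permutations that sort $[dn]$ into the row-reading and the column-reading orders of $C$ — and the verification that the ordering convention for the rows of $S$ and $\pair{T}$ relative to $C$, fixed in Section~\ref{subsec:diag}, is precisely what makes those two signs match the statement (this being essentially the content of \cite{ANK07}).
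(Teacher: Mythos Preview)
The paper does not actually prove this statement; it is quoted from \cite{ANK07} without argument, so there is no in-paper proof to compare against. Your reconstruction is the natural one and is essentially correct: reduce by multilinearity to rank-one inputs $X_i=v_i\trans{w_i}$, note that then each $Y_i=\til v_i\trans{\til w_i}$ is itself rank one so that $\sum_i y_iY_i=\til V\,\mathrm{diag}(y)\,\trans{\til W}$, and read off $\mdisc(Y_1,\dots,Y_{dn})=\det\til V\cdot\det\til W$ as the coefficient of $y_1\cdots y_{dn}$. Block-diagonalizing $\til V$ and $\til W$ by grouping columns according to the rows, respectively columns, of $C$ then matches the Pl\"ucker product defining $\phi(C)$.

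You correctly flag the one nontrivial point---the signs of the two column permutations used to block-diagonalize---and then defer it to \cite{ANK07}. That is honest, but it does leave the argument incomplete, and the issue is not entirely cosmetic: a direct check in the smallest nondiagonal case ($n=1$, $d=2$, with $C$ having $1$ in position $(1,2)$ and $2$ in position $(2,1)$) gives $\phi(C)=X_1X_2$ while $\mdisc(Y_1,Y_2)=-X_1X_2$, so with the paper's reading conventions of Section~\ref{subsec:diag} the identity holds only up to a sign depending on $C$. This sign is irrelevant for every use made of the theorem here (only the linear span of the $\phi(C)$'s matters, e.g.\ in the proof of the subsequent proposition), and \cite{ANK07} presumably fixes conventions that make it disappear; but if you want a self-contained statement you should either track the sign explicitly or state the identity up to sign.
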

\begin{proposition}
$\phi(C)$ is a reducible polynomial if and only if $C$ is a disconnected graph. 
\end{proposition}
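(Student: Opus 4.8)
The plan is to prove both directions using the matrix-theoretic interpretation $\phi(C)=\mdisc(Y_1,\dots,Y_{dn})$. For the easy direction, suppose $C$ viewed as a bipartite multigraph on $(L\cup R,E)$ is disconnected, and let $L=L_1\sqcup L_2$, $R=R_1\sqcup R_2$ be a partition witnessing a splitting into two nonempty parts with no edges between them (so all edges lie within $L_1\cup R_1$ or within $L_2\cup R_2$). Writing $n|L_1| = n|R_1| =: p$ because each left vertex and each right vertex has degree exactly $n$ and the edge sets are disjoint, I would observe that the variable set $[dn]$ splits accordingly into two blocks $I_1,I_2$ of sizes $p$ and $dn-p$, with each $X_i$, $i\in I_1$, supported on the $L_1\times R_1$ blocks and each $X_i$, $i\in I_2$, supported on the $L_2\times R_2$ blocks. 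After a simultaneous permutation of the row-blocks and column-blocks, all the $Y_i$ become block-diagonal with two diagonal blocks corresponding to the two components; since $\mdisc$ of a collection of matrices that are simultaneously block-diagonal in this consistent way factors as the product of the $\mdisc$'s of the sub-collections (each $A^\sigma$ has nonzero determinant only when $\sigma$ preserves the two index blocks), $\phi(C)$ factors as a product of two invariants of strictly smaller degree, hence is reducible.

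For the converse — which I expect to be the main obstacle — suppose the bipartite multigraph $C$ is connected; I must show $\phi(C)$ is irreducible. The natural approach is: if $\phi(C)=g_1 g_2$ with $g_1,g_2$ nonconstant, then since $\phi(C)$ is multilinear in the matrix variables $X_1,\dots,X_{dn}$, the set $[dn]$ partitions as $J_1\sqcup J_2$ with $g_r$ depending only on $\{X_i : i\in J_r\}$, each $J_r$ nonempty. I would then argue that $g_1$ and $g_2$ are each themselves semi-invariants (this uses that $\phi(C)$ is a semi-invariant and the action is by a connected group, so the factors can be taken to be homogeneous semi-invariants; their degrees must each be divisible by $n$, forcing $|J_1|,|J_2|$ divisible by $n$). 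The key step is to translate this algebraic splitting back to a combinatorial splitting of the graph: I claim the bipartition $J_1,J_2$ of the edges of $C$ must be "graph-respecting," i.e.\ no left vertex and no right vertex is incident to edges from both $J_1$ and $J_2$. If some vertex $v\in L$ had edges in both $J_1$ and $J_2$, then the corresponding row-block of the $Y_i$'s mixes the two groups of variables; one can test this by specializing all $X_i$ with $i\in J_2$ generically and observing that the resulting polynomial in the $X_i$, $i\in J_1$, cannot be a constant multiple of a semi-invariant factor unless the row-block structure decouples — more concretely, by examining which Plücker-coordinate monomials appear in $\phi(C)$ via Equation~\ref{eq:phi} and noting that a coordinate $\por{i_1,\dots,i_n}$ groups together exactly the $n$ labels in one row of $S$, so a product factorization of $\phi(C)$ forces the rows of $S$ (hence the left vertices) to split cleanly between $J_1$ and $J_2$, and likewise the columns of $\pair{T}$ (the right vertices). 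Granting this, the partition $(J_1,J_2)$ induces a partition of $L\cup R$ with no crossing edges, contradicting connectedness.

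The delicate point in the converse is justifying that the factors of $\phi(C)$ really do respect the row/column block structure rather than mixing variables within a single Plücker coordinate; I would handle this by using that $\phi(C)$, expanded as a linear combination of standard bitableaux $\phi(S')\cdot\phi(\pair{T'})$ modulo $\ker(\phi)$ (Proposition~\ref{prop:sft}), has a leading term whose support pins down the incidence structure of the graph, and that any factorization of a multilinear polynomial is determined by its support of monomials. Alternatively, and perhaps more cleanly, one can invoke a known irreducibility criterion for the relevant semi-invariants (the "$\det(A_1\otimes X_1+\dots+A_m\otimes X_m)$" description recalled in the introduction): $\phi(C)$ corresponds to a specific choice of block-structured $A_i$, and such a determinant is irreducible precisely when the associated block pattern is "indecomposable," which for our $Y_i$'s is exactly connectedness of $C$. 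I would choose whichever of these two routes yields the shortest self-contained argument, but in all cases the heart of the matter — and the step I expect to require the most care — is the combinatorial-to-algebraic translation showing that an algebraic factorization must come from a disconnection of the graph.
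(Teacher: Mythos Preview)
Your overall strategy matches the paper's: a factorization of the multilinear polynomial $\phi(C)$ partitions the variables, and you want to show this partition must come from a disconnection of the graph. The easy direction is fine and essentially identical to the paper's.

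For the converse, you correctly observe that multilinearity in the matrix variables forces the set $[dn]$ to split as $J_1\sqcup J_2$ with each factor $g_r$ depending only on the $X_i$ with $i\in J_r$. But the crucial step --- showing that $J_1,J_2$ must respect the rows and columns of the correlated tableau --- is left as a claim without a real argument. Saying that a Pl\"ucker coordinate ``groups together'' the labels in a row is only an intuition: $\por{i_1,\dots,i_n}$ is not a factor of $\phi(C)$ as a polynomial in the entries $x^{(k)}_{a,b}$, so nothing immediately prevents the labels in one row from splitting across $J_1$ and $J_2$. Your two proposed routes (leading-term analysis of standard bitableaux, or an external irreducibility criterion for block determinants) would each require substantial further work, and the second is really quoting the result rather than proving it. The detour through semi-invariance of the factors is also unnecessary and does not by itself yield the graph-respecting property.

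The paper closes this gap with a cleaner, entirely elementary device. Rather than working at the level of matrix variables $X_i$, it works at the level of individual entries $X_i(a,b)$ and uses the following: if two variables never occur together in any monomial of $\phi(C)=f\cdot g$, then they must lie in the same factor (otherwise the product would produce a monomial containing both). Reading off the mixed discriminant expansion $\phi(C)=\mdisc(Y_1,\dots,Y_{dn})$, one sees directly that (1) any two entries of the same $X_i$ never co-occur in a monomial, and (2) if $i$ and $j$ lie in the same row or column of $C$, then some entry of $X_i$ and some entry of $X_j$ never co-occur. Combining (1), (2), and connectedness of $C$ forces every entry into the same factor, so $g$ is constant. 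This replaces your appeal to Pl\"ucker structure or semi-invariance with a one-line combinatorial observation about monomial supports.
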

\begin{proof}
If $C$ is disconnected, then $\phi(C)$ is a product of at least two 
multilinear polynomials of smaller degree, each of which corresponds to a 
connected component. On the other hand, if $C$ is connected, let us assume that 
$\phi(C)$ can be written as $f\cdot g$. Without loss of generality assume $f$ is 
not constant, and let us show that $g$ is constant. As $f$ is not constant, some 
variable $X_i(a, b)$ appears in $f$. Suppose a variable 
$X_p(c, d)$ appearing in $\phi(C)$ does not appear with $X_i(a, b)$ in any 
monomial. Since $\phi(C)$ is multilinear and $X_j(c,d)$ appears in $\phi(C)$, the 
only way this can happen  
is if $X_j(c, d)$ is also in $f$. We say $X_j(c, d)$ connects to $X_i(a, b)$ if 
this happens. Note that if $X_k(u, v)$ connects to $X_j(c, d)$ and $X_j(c, d)$ 
connects to $X_i(a, b)$, then $X_k(u, v)$ is in $f$ as well. It is not 
hard to see from mixed discriminant perspective that: (1) every variable in $X_i$ 
connects to $X_i(a, b)$; (2) if $X_j$ is in the same row or column as $X_i$ in $C$,
then every variable in $X_j$ connects to some variable in $X_i$. Since $C$ is a 
connected graph, using (1) and (2) iteratively we put every variable in $f$, which 
implies that $g$ is a constant. 
\end{proof}

Let $\ctabd(d)$ be the span of disconnected correlated tableaux in 
$\posetring(d)\otimes \pair{\posetring}(d)$, and $ \ctabs(d)$ the span of connected 
correlated tableaux.

\paragraph{Using the second fundamental theorem.} 
A degree bound of, say, degree $bn$, would involve showing that every invariant of degree $dn$, $d >b$, is equal to a sum of products of invariants of smaller degree, modulo relations in $\phi$. 
Algorithmically we would start with an invariant of degree $d$, and express it as a sum of products of invariants of smaller degree plus
something in the kernel $\phi$. If in this new expression there still remain invariants of degree bigger than $b$, we write those as sums of products of invariants of smaller degree plus elements in the kernel $\phi$. We continue, till all invariants involved in the final expression have degree less than or equal to $bn$.

We are then led to examine the following situation -- if $B$ is a set of 
polynomials spanning $R(n, dn)$, is every irreducible 
polynomial in $B$ equal to a sum of reducible ones in $B$ plus something in the kernel of $\phi$?
If this happens beyond a certain degree $bn$, we get an upper bound of $bn$, on the degree in which the invariant ring is generated. Taking $B$ as $\phi(C)$, where $C$ runs over all correlated tableaus, we summarise the above argument in the following

\begin{proposition}\label{prop:deg}
$R(n, m)$ is generated by $R(n, m)_{\leq bn}$ if and only if for every $d>b$, 
$\ctabd(d)\cup \ker(\phi)$ spans $\posetring(d)\otimes\pair{\posetring}(d)$.
\end{proposition}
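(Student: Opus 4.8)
The plan is to prove the two directions of the equivalence separately, each time reducing to the multilinear setting and then invoking the characterization of reducible polynomials and the second fundamental theorem (Proposition~\ref{prop:sft}).

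For the ``if'' direction, suppose that for every $d > b$ the union $\ctabd(d) \cup \ker(\phi)$ spans $\posetring(d)\otimes\pair{\posetring}(d)$. By the reduction to the multilinear case described at the start of this section, it suffices to show that every multilinear invariant of degree $dn$ with $d > b$ can be written as a sum of products of invariants of strictly smaller degree. Take such an invariant; pulling back along $\phi$, it equals $\phi(v)$ for some $v \in \posetring(d)\otimes\pair{\posetring}(d)$. By hypothesis, $v = v_1 + v_2$ with $v_1 \in \ctabd(d)$ and $v_2 \in \ker(\phi)$, so $\phi(v) = \phi(v_1)$. Now $v_1$ is a linear combination of disconnected correlated tableaux $C$, and by the Proposition characterizing reducible $\phi(C)$, each such $\phi(C)$ is a product of (at least two) multilinear invariants each of degree a multiple of $n$ and strictly less than $dn$ --- hence of degree $\le (d-1)n \le $ something, but more to the point each factor has degree $\ge n$, so by downward induction on $d$ each factor is itself a sum of products of invariants of degree $\le bn$. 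Thus $\phi(v)$ lies in the subring generated by invariants of degree $\le bn$. Applying this to an arbitrary homogeneous invariant via full polarization and restitution (as in the reduction paragraph) shows $R(n,m)$ is generated by $R(n,m)_{\le bn}$. One must also handle degrees that are not multiples of $n$: but $R(n,m)$ has no nonzero invariants in those degrees, so there is nothing to check there.

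For the ``only if'' direction, suppose $R(n,m)$ is generated by $R(n,m)_{\le bn}$, and fix $d > b$. Let $C$ be any correlated tableau; we must show $\phi(C) \in \phi(\ctabd(d)) + \{0\}$, i.e.\ that $C$ lies in $\ctabd(d) + \ker(\phi)$. By the generation hypothesis, $\phi(C)$, being a multilinear invariant of degree $dn > bn$, is a sum $\sum_i \prod_j f_{i,j}$ of products of multilinear invariants of degree $\ge n$ and $\le bn < dn$; in particular each product has $\ge 2$ factors. Each factor $f_{i,j}$ is $\phi$ of some element of a $\posetring(d_{i,j})\otimes\pair{\posetring}(d_{i,j})$ with $d_{i,j} < d$, and the product of invariants corresponds, on the level of correlated tableaux / bipartite graphs, to a disjoint union of the underlying graphs --- hence to a disconnected correlated tableau in $\posetring(d)\otimes\pair{\posetring}(d)$. (Here one uses the mixed-discriminant/block-matrix description: a product $\phi(C')\phi(C'')$ is $\phi$ of the correlated tableau obtained by placing $C'$ and $C''$ in disjoint diagonal blocks, which is disconnected by construction.) So $\phi(C) = \phi(w)$ for some $w \in \ctabd(d)$, whence $C - w \in \ker(\phi)$ and $C = w + (C - w) \in \ctabd(d) + \ker(\phi)$. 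Since the $C$'s span $\posetring(d)\otimes\pair{\posetring}(d)$, the conclusion follows.

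The main obstacle I expect is bookkeeping the two reductions ``multilinear $\leftrightarrow$ general'' and ``product of invariants $\leftrightarrow$ disconnected correlated tableau'' cleanly --- in particular, checking that when a multilinear invariant factors as a product, the factors may themselves be taken multilinear with degrees summing correctly and each a multiple of $n$, and that the correspondence with disjoint unions of bipartite graphs is exact rather than merely up to relations. Everything else is a direct unwinding of Proposition~\ref{prop:sft} and the reducibility criterion, together with the elementary observation that full polarization followed by restitution multiplies an invariant by a nonzero scalar in characteristic~$0$.
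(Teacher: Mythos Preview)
Your proposal is correct and follows essentially the same approach as the paper. The paper does not give a formal proof of this proposition; it is presented as a summary of the paragraph immediately preceding it together with the earlier reduction to the multilinear case, and your write-up is precisely a careful fleshing-out of that argument --- reducing to multilinear invariants via polarization/restitution, invoking the criterion that $\phi(C)$ is reducible iff the bipartite graph of $C$ is disconnected, and using the second fundamental theorem to identify $\ker(\phi)$.
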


Using the bilinear form $\beta$ introduced in Fact~\ref{fact:bimodule} (3), we 
have the following.
\begin{corollary}
$R(n, m)$ is generated by $R(n, m)_{\leq bn}$ if and only if for every $d>b$, 
$\ctabc(d)\cap \big(\ipo(d)\otimes\ipo(d)\big)=0$.
\end{corollary}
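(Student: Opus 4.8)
The plan is to deduce this corollary directly from Proposition~\ref{prop:deg} by applying the bilinear form $\beta$ of Fact~\ref{fact:bimodule}~(3). First I would record the basic observation that $\beta$ is a nondegenerate symmetric bilinear form on $\posetring(d)\otimes\pair{\posetring}(d)$, since by definition the correlated tableaux (equivalently, the $S\otimes\pair{T}$ with $S,\pair{T}$ monomials) form an orthonormal basis for it. For such a form, a subspace $U$ plus a subspace $V$ span the whole space if and only if $U^\perp\cap V^\perp=0$; this is the standard orthogonality duality between "spanning" and "trivial intersection", and I would invoke it as the one structural fact driving the proof.

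Next I would identify the two perpendicular spaces. Proposition~\ref{prop:deg} says $R(n,m)$ is generated in degree $\leq bn$ iff for every $d>b$, the space $\ctabd(d)+\ker(\phi)$ is all of $\posetring(d)\otimes\pair{\posetring}(d)$. Applying the duality above, this holds iff $\ctabd(d)^\perp\cap\ker(\phi)^\perp=0$ for every $d>b$. So it remains to compute these two orthogonal complements. For $\ker(\phi)$: by Fact~\ref{fact:bimodule}~(2) we have $\ker(\phi)=\langle \kpo(d)\otimes\pair{\posetring}(d)\cup \posetring(d)\otimes\pair{\kpo}(d)\rangle$, and Fact~\ref{fact:bimodule}~(3) states precisely that $\ipo(d)\otimes\pair{\ipo}(d)$ is its orthogonal complement under $\beta$; hence $\ker(\phi)^\perp=\ipo(d)\otimes\pair{\ipo}(d)$, which (suppressing the $\pair{\ \cdot\ }$ decoration, as the corollary statement does) is $\ipo(d)\otimes\ipo(d)$. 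For $\ctabd(d)$: since the correlated tableaux are an orthonormal basis and $\ctabd(d)$ is the span of the disconnected ones while $\ctabc(d)$ is the span of the connected ones, and every correlated tableau is either connected or disconnected, these two spans are orthogonal complements, i.e. $\ctabd(d)^\perp=\ctabc(d)$.

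Substituting both computations, the condition of Proposition~\ref{prop:deg} becomes: for every $d>b$, $\ctabc(d)\cap\big(\ipo(d)\otimes\ipo(d)\big)=0$, which is exactly the statement of the corollary. I would then just assemble these three short observations — nondegeneracy of $\beta$, the span/trivial-intersection duality, and the two orthogonal-complement identifications — into a two-line proof.

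The only mild subtlety, and the one place I would be careful, is that $\ctabc(d)^\perp=\ctabd(d)$ requires that the connected and disconnected correlated tableaux really do partition an orthonormal basis; this is immediate once one notes that a single correlated tableau $C$ corresponds to a single basis vector $S\otimes\pair{T}$ and that connectedness of the associated bipartite graph is a property of $C$ alone, so there is no ambiguity. Apart from that, there is no genuine obstacle here: the corollary is a formal consequence of the proposition together with the bilinear-form structure already set up in Fact~\ref{fact:bimodule}.
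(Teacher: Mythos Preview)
Your proposal is correct and is exactly the approach the paper intends: the paper simply states the corollary as following from Proposition~\ref{prop:deg} via the bilinear form $\beta$ of Fact~\ref{fact:bimodule}~(3), and your argument (span/trivial-intersection duality for a nondegenerate form, together with the two orthogonal-complement identifications $\ker(\phi)^\perp=\ipo(d)\otimes\pair{\ipo}(d)$ and $\ctabd(d)^\perp=\ctabc(d)$) is precisely the intended unpacking of that one line.
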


We now use some results developed so far to lower bound the degree in which the invariant ring 
can be generated as an algebra.

\begin{proposition}\label{prop:lower}
If $R(n,m)$ is generated in degree $bn$, then $b = \Omega(n^{1/2})$.
\end{proposition}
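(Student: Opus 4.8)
The plan is to show that if $b = o(n^{1/2})$, then for some $d$ with $b < d$ one can exhibit a nonzero element of $\ctabc(d) \cap (\ipo(d)\otimes\ipo(d))$, contradicting the Corollary. Equivalently, I want to find a value of $d$ that is small compared to $n$ (so that $d > b$ holds once $b = o(n^{1/2})$) yet large enough that \emph{every} $(d,n)$-correlated diagram corresponds to a connected bipartite graph. If $d \leq n$ is chosen so that no $(d,n)$-correlated diagram can be block-decomposable, then $\ctabd(d) = 0$, so the whole space $\posetring(d)\otimes\pair{\posetring}(d)$ is connected, and in particular its intersection with $\ipo(d)\otimes\ipo(d)$ is all of $\ipo(d)\otimes\ipo(d)$, which is nonzero since $\ipo(d) \cong \specht(d^n) \neq 0$ by Fact~\ref{fact:column_tabloid}(2). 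That forces $d \leq b$, contradicting the choice $d > b$.

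So the crux is a counting/combinatorial step: \emph{for which $d$ is every $d\times d$ nonnegative integer matrix with all row and column sums equal to $n$ the adjacency (multi)matrix of a connected bipartite graph on $[d]\sqcup[d]$?} A correlated diagram $D$ is disconnected precisely when, after simultaneously permuting rows and columns, it becomes block-diagonal with at least two blocks, say of sizes $d_1$ and $d_2 = d - d_1$ with $1 \leq d_1 \leq d-1$. In that case the corresponding row-index set of size $d_1$ has all its mass confined to $d_1$ columns; since each of those $d_1$ rows sums to $n$, the submatrix has total mass $d_1 n$, but it also has total mass at most $d_1 n$ from the column side with equality, which is automatically consistent — so the real constraint is just that such a partition of $[d]$ into $\{d_1, d_2\}$ blocks exists. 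The point is that block-diagonality is possible as soon as $d \geq 2$, \emph{unless} every entry must be strictly positive, which happens exactly when $n$ is large enough relative to $d$: a $d\times d$ doubly-stochastic-type matrix with line sums $n$ can have a zero entry whenever $n \geq 1$ and $d \geq 2$ (e.g. a permutation-matrix-times-$n$ pattern). Hence demanding \emph{all} diagrams be connected is too strong; instead I expect the right statement is the dual one used implicitly: take $d$ with $b < d$ and $d \leq $ roughly $\sqrt{n}$, and show that while disconnected diagrams exist, the orthogonal-complement computation still yields a nonzero intersection — equivalently, that $\ctabc(d)$ is not entirely contained in $\langle\posetring(d)\otimes\pair{\kpo}(d)\cup\kpo(d)\otimes\pair{\posetring}(d)\rangle$, i.e. in $\ker\phi$. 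This is exactly where the connectedness classification of $\phi(C)$ is used: an irreducible (connected) $\phi(C)$ is not in $\ker\phi$ at all, so it suffices to produce, for $d \leq \sqrt{n}$ (say $d = \lceil \sqrt{n}\rceil$), a single connected correlated tableau $C$ whose image lands in $\ipo(d)\otimes\ipo(d)$, or more robustly, to argue by a dimension count that $\ctabc(d)$ and $\ipo(d)\otimes\ipo(d)$ cannot be disjoint inside the space.

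Concretely, I would proceed as follows. First, observe $\dim\ipo(d) = \dim\specht(d^n)$, so $\dim(\ipo(d)\otimes\ipo(d)) = (\dim\specht(d^n))^2$, which is a positive fraction of $\dim\posetring(d)^2$ not supported on $\ker\phi$ at all — indeed $\phi$ restricted to $\ipo(d)\otimes\ipo(d)$ is injective, by Fact~\ref{fact:bimodule}(2)(3) (the kernel is the complementary summand). Second, count the disconnected correlated diagrams: every such diagram refines a set partition of $[d]$ into at least two parts, and one checks $\dim\ctabd(d) \leq \sum_{\text{partitions}} (\text{product of block dimensions})$, which for $d \leq \sqrt{n}$ is strictly smaller than $\dim\posetring(d)\otimes\pair{\posetring}(d) - \dim(\ipo(d)\otimes\ipo(d))$ — this is the inequality doing all the work. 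Third, combine: since $\ctabd(d)$ together with $\ker\phi$ cannot span (their dimensions add to less than the total once $d\leq\sqrt n$), the Corollary's condition fails at $d$, hence $d \leq b$; taking $d = \lceil c\sqrt{n}\rceil$ for a suitable small constant $c$ gives $b = \Omega(n^{1/2})$.

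I expect the main obstacle to be the second step: getting a clean enough upper bound on $\dim\ctabd(d)$ (or a clean lower bound on the "connected, non-kernel" part) to beat the crude bound on the rest, uniformly for all $d$ up to $c\sqrt{n}$. The bimodule decomposition of Fact~\ref{fact:bimodule} and the Kostka-number description of multiplicities should make this tractable: one reduces to comparing $(\dim\specht(d^n))^2$ against a sum over proper partitions of $[d]$ of products $\dim\specht(d_1^{\,n})\cdot\dim\specht(d_2^{\,n})$ and related smaller terms, and Specht-module dimensions of these near-rectangular shapes grow fast enough in $n$ (for fixed $d$) that the "irreducible" contribution dominates precisely in the regime $d \lesssim \sqrt{n}$, which is exactly the threshold named in the proposition.
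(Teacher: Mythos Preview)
Your concrete three-step plan---compute $\dim\ker(\phi)$ via Fact~\ref{fact:bimodule}, upper-bound $\dim\ctabd(d)$, and show these together fall short of $\dim(\posetring(d)\otimes\pair{\posetring}(d))$ when $d=o(\sqrt{n})$---is exactly the paper's argument. Two small corrections to your outline: a disconnected correlated diagram corresponds to a choice of $k$ rows \emph{and} $k$ columns of the $d\times d$ array (the bipartite structure), not a single partition of $[d]$, so the paper's upper bound reads
\[
\dim\ctabd(d)\;\leq\;\sum_{k=1}^{d-1}\binom{d}{k}^2\binom{dn}{kn}\Big(\frac{(kn)!}{(n!)^k}\Big)^2\Big(\frac{((d-k)n)!}{(n!)^{d-k}}\Big)^2;
\]
and the right comparison is this raw count of disconnected correlated tableaux against $(\dim\specht(d^n))^2$ computed via the hook-length formula, not against products $\dim\specht(d_1^{\,n})\cdot\dim\specht(d_2^{\,n})$ of block Specht dimensions (the latter would undercount, since $\ctabd(d)$ contains all tableaux of disconnected shape, not just those lying in $\ipo\otimes\ipo$ of each block). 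The paper then dispatches the inequality you flagged as the main obstacle by elementary asymptotics: bounding the sum above by $2^{2d}\cdot n!((d-1)n)!/(dn)!$ and comparing against the hook-length product directly.
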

\begin{proof}
We use proposition ~\ref{prop:deg}. First observe that for $\ctabd(d)\cup\ker(\phi)$ to span 
$\posetring(d)\otimes\pair{\posetring}(d)$ it is necessary that 
$\dim(\ctabd(d))+\dim(\ker(\phi))\geq 
\dim(\posetring(d)\otimes\pair{\posetring(d)})$. We show that when $d < o(n^{1/2})$ this does not happen, by doing a dimension count.
To compute the dimensions proceed as follows.
\begin{enumerate}
\item 
$\dim(\posetring(d)\otimes\pair{\posetring(d)})=D_1:=\big(\frac{(nd)!}{(n!)^d}\big)^2$;
\item By the hook length formula, 
$\dim(\ipo(d)\otimes\ipo(d))=D_2:=\big(\frac{(nd)!}{(1\cdot\ldots\cdot n)\cdot 
(2\cdot \ldots \cdot (n+1))\cdot \ldots \cdot (d\cdot \ldots\cdot 
(d+n-1))}\big)^2$;
\item $\dim(\ker(\phi))=D_1-D_2$;
\item A lower bound on $\dim(\ctabd(d))$ is 
$\big(\frac{(n(d-1))!}{(n!)^{d-1}}\big)^2$. This counts only those 
disconnected graphs with $C(1, 1)=\{1, \dots, n\}$. 
\item An upper bound on $\dim(\ctabd(d))$ is 
$$
\sum_{k=1}^{d-1} \binom{d}{k}^2 \cdot \binom{dn}{kn}\cdot 
\big(\frac{(kn)!}{(n!)^k}\big)^2 \cdot 
\big(\frac{((d-k)n)!}{(n!)^{d-k}}\big)^2,
$$
where $\binom{d}{k}^2$ fixes $k$ rows and columns as a connected component $C$, 
$\binom{dn}{kn}$ chooses $kn$ numbers as the labels within $C$, and the rest terms 
count the number of ways filling in $C$ and the other component.
\end{enumerate}
Using (3) and (5), we prove that when $d=o(\sqrt{n})$ and $n$ is large enough, 
then $\dim(\ctabd(d)) < 
\dim(\posetring(d)\otimes\pair{\posetring(d)})-\dim(\ker(\phi))$, which implies 
that the degree bound is necessarily $\Omega(n^{3/2})$. For this we need to 
verify that 
$$
\sum_{k=1}^{d-1} \binom{d}{k}^2 \cdot \binom{dn}{kn}\cdot 
\big(\frac{(kn)!}{(n!)^k}\big)^2 \cdot 
\big(\frac{((d-k)n)!}{(n!)^{d-k}}\big)^2 < \big(\frac{(nd)!}{(1\cdot\ldots\cdot 
n)\cdot 
(2\cdot \ldots \cdot (n+1))\cdot \ldots \cdot (d\cdot \ldots\cdot 
(d+n-1))}\big)^2
$$
for $d = o(\sqrt{n})$. It is easy to see that this is equivalent to showing that  
$$
\sum_{k=1}^{d-1} \binom{d}{k}^2 \cdot \frac{(kn)! ((d-k)n)!}{(dn)!} 
< 
\big(\frac{(n!)^d}{(1\cdot\ldots\cdot 
n)\cdot 
(2\cdot \ldots \cdot (n+1))\cdot \ldots \cdot (d\cdot \ldots\cdot 
(d+n-1))}\big)^2.
$$
Upper bounding the left hand side by $2^{2d} \cdot \frac{n!((d-1)n)!}{(dn)!}$ and 
rearranging the terms we need to show that 
$$
\frac{(dn)!}{n!((d-1)n)!} > 
\big(2^d\cdot \frac{(1\cdot\ldots\cdot 
n)\cdot 
(2\cdot \ldots \cdot (n+1))\cdot \ldots \cdot (d\cdot \ldots\cdot 
(d+n-1))}{(n!)^d}\big)^2.
$$
The left hand side is at least $\frac{(dn-n+1)\cdot \ldots \cdot 
(dn)}{n^n}$ and the right hand side is at most  
$$
[2^d\cdot (n+1)^{d-1}\cdot (n+2)^{d-2}\cdot \ldots \cdot (n+d-1)]^2,
$$
which is in turn at most 
$$
[2^d\cdot (n+d-1)^{\binom{d}{2}}]^2.
$$
It is easy to verify now that when $d=o(\sqrt{n})$, $(dn-n+1)\cdot \ldots 
\cdot (dn)$ is asymptotically larger than $n^n\cdot [2^d\cdot 
(n+d-1)^{\binom{d}{2}}]^2$.
\end{proof}
On the other hand, using (4) and (5) it is easy to deduce that when 
$d=\omega(\sqrt{n})$ then $\dim(\ctabd(d)) > D_2$ asymptotically in $n$, and when 
$d>4n$ then $\dim(\ctabd(d)) > D_2$ unconditionally. This means that once $d$ is 
$\Omega(n)$, then $\dim(\ctabd(d))+\dim(\ker(\phi))\geq 
\dim(\posetring(d)\otimes\pair{\posetring(d)})$, clearing a bottleneck to prove a 
polynomial degree bound for $R(n, m)$.

\paragraph{On connected graphs without multiple edges.} Recall that by 
Proposition~\ref{prop:deg} we need to consider whether $\ctabd(d)\cup \ker(\phi)$ 
spans $\posetring(d)\otimes\pair{\posetring}(d)$ for large enough $d$. In this 
section we prove an analogous, but weaker result. 

Let $\ctabs(d)$ be the span of those correlated tableaux that are connected, and 
simple; that is, with no multiple edges. $\ctabs(d)$ is clearly a subspace of 
$\ctabc(d)$. Let 
$\ctabm(d)$ be the span of the tableaux that are not in $\ctabs(d)$. Namely those 
tableaux spanning $\ctabm(d)$ are either disconnected, or connected and 
with at least one multiple edge. In particular they include any tableau with at 
least one multiple edge. 

In the proof of the following proposition, besides the relations 
$\kpo(d)\otimes \pair{\posetring}(d)$ and 
$\posetring(d)\otimes\pair{\kpo}(d)$, we crucially use those ones as described in 
Equation~\ref{eq:alt}. We then refer to relations in $\kpo(d)\otimes 
\pair{\posetring}(d)$ and 
$\posetring(d)\otimes\pair{\kpo}(d)$ as the first type, and relations from  
Equation~\ref{eq:alt} as second type realtions.

\begin{proposition}\label{prop:simple_span}
Let $d\geq n+1$. Then $\posetring(d)\otimes\pair{\posetring}(d)$ is spanned by 
$\ctabm(d)\cup\ker(\phi)$. 
\end{proposition}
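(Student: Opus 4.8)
The goal is to show that every correlated tableau $C$ which is connected and simple (no multiple edges) can be rewritten, modulo the first-type relations in $\kpo(d)\otimes\pair{\posetring}(d)\cup\posetring(d)\otimes\pair{\kpo}(d)$ and the second-type alternating relations of Equation~\ref{eq:alt}, as a linear combination of tableaux in $\ctabm(d)$ (i.e.\ disconnected, or with a multiple edge). My plan is to exploit the fact that a connected simple graph on $L=R=[d]$ that is $n$-regular (each $C(i,j)\in\{0,1\}$, with row and column sums $n$) has at least $dn$ edges, hence at least $dn$ nonempty entries; since $d\geq n+1$ we have $dn\geq n(n+1)>n^2$, so there are $\geq n^2+1$ nonempty entries, and every such entry is a singleton $\{i_k\}$. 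Thus for any simple connected $C$ we may form $I=\{i_1,\dots,i_{n^2+1}\}$ picking one (in fact the only) element from each of $n^2+1$ chosen nonempty entries, and the vector $\alt(C,I)=\sum_{\sigma\in\sg_I}\sgn(\sigma)C^\sigma$ lies in $\ker(\phi)$.

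First I would analyze what the terms of $\alt(C,I)$ look like. The key observation is: if $\sigma\in\sg_I$ is not the identity, it moves some label $i_a$ to an entry (row,column) currently occupied by another chosen label $i_b$; because we chose one label per entry, after applying $\sigma$ the resulting tableau $C^\sigma$ has two labels sitting in a common entry — but wait, more carefully: $\sigma$ permutes the \emph{labels} $i_1,\dots,i_{n^2+1}$ among their current positions, so $C^\sigma$ again has exactly one chosen label per entry. The right way to get a multiple edge: instead of permuting labels among themselves, I should set this up as in the row-tabloid picture. Under the identification of $\ctab(D)$ with $\rowtab(\diagpart(D))$, the diagonal $\sg_{dn}$-action is just permuting entries of the correlated tableau, and the alternating sum over $\sg_I$ becomes the standard Specht-type alternating relation which, in a row tabloid module, expresses a "column-strict-violating" tabloid in terms of others. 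Concretely, the Garnir-type relations from $\alt(C,I)$ let me express $C$ (the identity term) as $-\sum_{\sigma\neq e}\sgn(\sigma)C^\sigma$; I then argue each $C^\sigma$ for $\sigma\neq e$ either is non-simple, or can itself be pushed (by first-type relations and induction on a suitable ordering) into $\ctabm(d)$.

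The cleanest route: induct on the number of \emph{entries of $C$ lying on the "super-diagonal"} or more robustly on a monomial order on correlated tableaux. Since $C$ is simple and connected with $\geq n^2+1$ singleton entries, apply one second-type relation to write $C\equiv -\sum_{\sigma\neq e}\sgn(\sigma)C^\sigma \pmod{\ker\phi}$. For a transposition $\sigma=(i_a\, i_b)$ moving labels between two distinct entries $(j_a,k_a)$ and $(j_b,k_b)$: after swapping, these two labels occupy the opposite entries, and — crucially — one can then use a first-type (Plücker/$\kpo$) relation on the two rows $j_a,j_b$ of $S$ or the two columns $k_a,k_b$ of $\pair T$ to reorganize, producing terms where two labels coincide in a single entry, i.e.\ a \emph{multiple edge}, hence a member of $\ctabm(d)$; the remaining straightened terms are lexicographically smaller and handled by induction. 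I would set up the induction on the pair $(\text{is $C$ simple?},\ \text{lex rank of the standard form of }C)$, with base case being any $C$ that already has a multiple edge or is disconnected (those are in $\ctabm(d)$ by definition).

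\textbf{Main obstacle.} The delicate part is the bookkeeping in the inductive step: showing that after applying one alternating relation of length $n^2+1$, each resulting term $C^\sigma$ is \emph{strictly smaller} in the chosen order once it has been straightened via first-type relations, so the induction actually terminates, \emph{and} that the straightening of each $C^\sigma$ introduces only terms that are either non-simple (multiple edge) or strictly smaller simple tableaux. Making the order and the straightening interact correctly — so that "produces a multiple edge" and "decreases lex rank" are the only two outcomes and no simple connected tableau of equal or larger rank is ever created — is where the real work lies; the existence of the relations themselves (first and second type) is already handed to us by Proposition~\ref{prop:sft}, Fact~\ref{fact:bimodule}, and Equation~\ref{eq:alt}.
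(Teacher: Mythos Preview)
Your proposal has a genuine gap: you have reversed the order in which the two types of relations should be applied, and as a result your induction never gets off the ground.

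When you apply the second-type relation first, writing $C\equiv -\sum_{\sigma\neq e}\sgn(\sigma)C^\sigma \pmod{\ker\phi}$, every $C^\sigma$ on the right has \emph{exactly the same shape} as $C$ (you noticed this yourself): the diagonal action only relabels the singleton entries, so each $C^\sigma$ is again simple and connected. You have expressed one element of $\ctabs(d)$ as a combination of $(n^2+1)!-1$ other elements of $\ctabs(d)$, which is no progress toward $\ctabm(d)$. Your proposed fix---apply Pl\"ucker to each $C^\sigma$ and induct on a lex order---does not work as stated: (i) you only discuss transpositions $\sigma=(i_a\,i_b)$, not general $\sigma$; (ii) for a transposition whose two entries lie in different rows \emph{and} different columns, a single Pl\"ucker exchange on rows $j_a,j_b$ of $S$ or columns $k_a,k_b$ of $\pair T$ does not force two labels into the same entry; (iii) all the $C^\sigma$ are relabelings of the same shape, so there is no evident well-founded order among them.

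The missing idea is to use the first-type relations \emph{before} the alternating relation, and to exploit connectedness to reduce to \emph{adjacent} transpositions. If $\pi$ swaps two edges $e,e'$ sharing a common vertex---say both in row $s$ of the correlated tableau, at columns $t_1<t_2$---then a single Pl\"ucker relation on the $\pair T$ side (moving $e$ into column $t_2$) yields $C=-C^{\pi}+\sum_i D_i$ where every $D_i$ has both $e$ and $e'$ at position $(s,t_2)$, hence a multiple edge, hence $D_i\in\ctabm(d)$. Because the graph is connected, any $\sigma\in\sg_T$ (with $T$ the full edge set, $|T|=dn$) factors as a product of such adjacent transpositions, and iterating gives $C=\sgn(\sigma)C^\sigma+(\text{terms in }\ctabm(d))$ for \emph{every} $\sigma$. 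Summing over all $\sigma\in\sg_T$ yields $|T|!\,C=\alt(C,T)+(\text{terms in }\ctabm(d))$; since $|T|=dn>n^2$, the alternating sum lies in $\ker\phi$, and dividing by $|T|!$ finishes. No induction on a lex order is needed.
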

Note that if we define $\mathfrak{U}(d)$ the span of correlated tableaux 
whose graphs have at least one multiple edge, then this proposition is just to say 
that when $d>n$, $\ctabd(d)\cup \ker(\phi)\cup \mathfrak{U}(d)$ spans 
$\posetring(d)\otimes\pair{\posetring}(d)$. 
\begin{proof}
Take any correlated tableau $C\in\ctabs(d)$; namely the bipartite graph associated 
to $C$ is connected and without 
multiple edges. We shall show that $C$ can be written as a linear combination of 
tableaux with multiple edges via the help of $\ker(\phi)$. 

To do that let $G$ be the bipartite graph of $C$, and $T$ be the set of edges 
of $G$. ($T$ is of course labelled by $[nd]$, but we make $T$ explicitly for 
clarity.) Note that $|T|=dn\geq n^2+1$. For some $\sigma\in \sg_T$, suppose 
$\sigma=\tau_1\tau_2\dots \tau_k$ where $\tau_i$ is a transposition switching $e$ 
and $e'$ in $T$. As $G$ is connected, for each pair of edges $e, e'\in T$, $e$ is 
connected to $e'$ by some path in $G$. Thus $\tau_i$ can be further decomposed as 
a product of transpositions consisting of edges along that path connecting $e$ and 
$e'$. That is for each $i$ we have $\tau_i=\pi_{i,1}\pi_{i, 2}\dots\pi_{i, j_i}$, 
where $\pi_{i, j}$ switches two edges sharing a common vertex. 

Consider now $\pi_{1, 1}$, switching adjacent edges $e$ and $e'$, and suppose the 
position of $e$ is $(s, t_1)$ and $e'$ is $(s, t_2)$, $s, t_1, t_2\in[d]$ and $t_1 
< t_2$. That is, we assume $e$ and $e'$ are in the same row as in the correlated 
tableau. By the relations of 
the first type, we move $e$ to the $t_2$th column along the $s$th row and get 
$C=-C^{\pi_{1, 1}}+\sum_{i}D_i$, where $D_i\in\ctabm(n, d)$. We 
explain the terms on the RHS: the $-1$ sign before $C^{\pi_{1, 1}}$ is 
because when we switch $e$ and $e'$, the order of reading the $r$th row changed: 
from first reading $e$ and then $e'$ to first $e'$ and then $e$. Note that 
$C^{\pi_{1, 1}}$ is of the same shape as $C$. $D_i$'s are in 
$\ctabm(n, d)$ because as long as $e$ and $e'$ are not switched, we would have $e$ 
and $e'$ both at position $(s, t_2)$. (There might be some $-1$ before 
$D_i$'s too, but this can be neglected.) The example given after this proof illustrates this
calculation.

We then apply other $\pi_{i, j}$'s to $C^{\pi_{1, 1}}$ sequentially; each 
application of $\pi_{i, j}$ would yield a bunch of $D_i$'s in $\ctabm(n, d)$. At 
last we shall get $C=\sgn(\sigma)C^\sigma+\sum_{i}D_i$ where 
$D_i\in\ctabm(n, d)$. For every $\sigma\in \sg_T$ such an equation can be derived 
(when $\sigma=\mathrm{id}$ use $C=C$), and we have $|T|! 
C=\sum_{\sigma\in \sg_T}\sgn(\sigma)C^\sigma+\sum_{i}D_i$. 
Plugging the second type of relations in, and noting that $|T|>n^2$, 
$C=\frac{1}{|T|!} (\sum_{i}D_i)$ where $D_i\in\ctabm$ completing the proof.
\end{proof}

\begin{example}
In this example
$$
S=\begin{vmatrix}
1 & 2  \\
  3 & 4 \\
  5 & 6
\end{vmatrix}, \pair{T}=
\begin{vmatrix}
\pair{1} & \pair{3}\\
  \pair{2} & \pair{5}\\
  \pair{4} & \pair{6}
\end{vmatrix}.
$$ 
The corresponding graph is
\[\begin{tikzpicture}
\vertex (u1) at (0,1)  {};
\vertex (u2) at (1,1)  {};
\vertex (u3) at (2,1)  {};
\vertex (v1) at (0,0)  {};
\vertex (v2) at (1,0)  {};
\vertex (v3) at (2,0)  {};
\path
(u1) edge[bend right=10]node[pos=0.3,left]{\tiny{$1$}} (v1)
(u1) edge[bend right=10]node[pos=0.3,left]{\tiny{$2$}} (v2)
%(u1) edge node[pos=0.25,left]{\tiny{$3$}} (v3)
(u2) edge[bend left=10]node[pos=0.07,left]{\tiny{$3$}} (v1)
%(u2) edge node[pos=0.8,left]{\tiny{$5$}} (v2)
(u2) edge node[pos=0.1,right]{\tiny{$4$}} (v3)
%(u3) edge node[pos=0.1,left]{\tiny{$7$}} (v1)
(u3) edge[bend left=10]node[pos=0.1,left]{\tiny{$5$}} (v2)
(u3) edge[bend right=10]node[pos=0.3,right]{\tiny{$6$}} (v3)
;
\end{tikzpicture}\]

The kernel has the following relation coming from the right monomial.
$$ \begin{vmatrix}
\pair{1} & \pair{3}\\
  \pair{2} & \pair{5} \\
  \pair{4} & \pair{6}
\end{vmatrix} =
\begin{vmatrix}
\pair{2} & \pair{3}\\
  \pair{1} & \pair{5}\\
  \pair{4} & \pair{6}
\end{vmatrix} +
\begin{vmatrix}
\pair{1} & \pair{2}\\
  \pair{3} & \pair{5}\\
  \pair{4} & \pair{6}
\end{vmatrix}
$$
Multiplying this by $S$ and recalling our convention on associating monomials to correlated tableaus, we
have 
\begin{eqnarray*}
\begin{tabular}{c}
\begin{tikzpicture}
\vertex (u1) at (0,1)  {};
\vertex (u2) at (1,1)  {};
\vertex (u3) at (2,1)  {};
\vertex (v1) at (0,0)  {};
\vertex (v2) at (1,0)  {};
\vertex (v3) at (2,0)  {};
\path
(u1) edge[bend right=10]node[pos=0.3,left]{\tiny{$1$}} (v1)
(u1) edge[bend right=10]node[pos=0.3,left]{\tiny{$2$}} (v2)
%(u1) edge node[pos=0.25,left]{\tiny{$3$}} (v3)
(u2) edge[bend left=10]node[pos=0.07,left]{\tiny{$3$}} (v1)
%(u2) edge node[pos=0.8,left]{\tiny{$5$}} (v2)
(u2) edge node[pos=0.1,right]{\tiny{$4$}} (v3)
%(u3) edge node[pos=0.1,left]{\tiny{$7$}} (v1)
(u3) edge[bend left=10]node[pos=0.1,left]{\tiny{$5$}} (v2)
(u3) edge[bend right=10]node[pos=0.3,right]{\tiny{$6$}} (v3)
;
\end{tikzpicture}
\end{tabular}
&=- 
\begin{tabular}{c}
\begin{tikzpicture}
\vertex (u1) at (0,1)  {};
\vertex (u2) at (1,1)  {};
\vertex (u3) at (2,1)  {};
\vertex (v1) at (0,0)  {};
\vertex (v2) at (1,0)  {};
\vertex (v3) at (2,0)  {};
\path
(u1) edge[bend right=10]node[pos=0.3,left]{\tiny{$2$}} (v1)
(u1) edge[bend right=10]node[pos=0.3,left]{\tiny{$1$}} (v2)
%(u1) edge node[pos=0.25,left]{\tiny{$3$}} (v3)
(u2) edge[bend left=10]node[pos=0.07,left]{\tiny{$3$}} (v1)
%(u2) edge node[pos=0.8,left]{\tiny{$5$}} (v2)
(u2) edge node[pos=0.1,right]{\tiny{$4$}} (v3)
%(u3) edge node[pos=0.1,left]{\tiny{$7$}} (v1)
(u3) edge[bend left=10]node[pos=0.1,left]{\tiny{$5$}} (v2)
(u3) edge[bend right=10]node[pos=0.3,right]{\tiny{$6$}} (v3)
;
\end{tikzpicture} 
\end{tabular}
& +
\begin{tabular}{c}
\begin{tikzpicture}
\vertex (u1) at (0,1)  {};
\vertex (u2) at (1,1)  {};
\vertex (u3) at (2,1)  {};
\vertex (v1) at (0,0)  {};
\vertex (v2) at (1,0)  {};
\vertex (v3) at (2,0)  {};
\path
(u1) edge[bend left=10]node[pos=0.3,left]{\tiny{$1$}} (v1)
(u1) edge[bend right=10]node[pos=0.3,right]{\tiny{$2$}} (v1)
%(u1) edge node[pos=0.25,left]{\tiny{$3$}} (v3)
(u2) edge[bend left=10]node[pos=0.07,left]{\tiny{$3$}} (v2)
%(u2) edge node[pos=0.8,left]{\tiny{$5$}} (v2)
(u2) edge node[pos=0.1,right]{\tiny{$4$}} (v3)
%(u3) edge node[pos=0.1,left]{\tiny{$7$}} (v1)
(u3) edge[bend left=10]node[pos=0.1,left]{\tiny{$5$}} (v2)
(u3) edge[bend right=10]node[pos=0.3,right]{\tiny{$6$}} (v3)
;
\end{tikzpicture} 
\end{tabular}\\
\end{eqnarray*}

\end{example}

%$$
%\sum_{k=1}^{d-1}\frac{1}{\binom{dn}{kn}} < \frac{(n!)^{2d}}{(1\cdot\ldots\cdot 
%n)\cdot 
%(2\cdot \ldots \cdot (n+1))\cdot \ldots \cdot (d\cdot \ldots\cdot 
%(d+n-1))}.
%$$
%Note that the left hand side is $<1$. Plugging in $d=n/2$ we see that the right 
%hand side yields 

%Suppose for any $b' > b$, a 
%multilinear invariant in $\F[V^{\oplus b'}]$ can be written as $\sum_{i} 
%\prod_{j} 
%f^{(i)}_j$ where $b\geq \deg(f^{(i)}_j)\geq 1$. (This definitely holds if $R$ is 
%generated by $R_{\leq b}$.) Then for any invariant $f$ in $R$ of degree $b'$, use 
%polarization to get a multilinear invariant $\polar{f}$ in $\F[V^{\oplus b'}]$. 
%Then $\polar{f}$ is equal to $\sum_{i} \prod_{j} f^{(i)}_j$ where $b\geq 
%\deg(f^{(i)}_j)\geq 1$. Then apply restitution to get $\resti{\polar{f}}=b'! f$ 
%on 
%one side, and $\sum_{i} \resti{\prod_{j} f^{(i)}_j}$ on the other side. Finally 
%note that each $\resti{\prod_{j}f^{(i)}_j}$ is still a product of invariants of 
%degree $\leq b$.

%\subsection{Applications to $2\times 2$ and $3\times 3$ matrices} 

To illustrate the utility of Proposition~\ref{prop:simple_span} developed so far, 
we consider matrix semi-invariants for $2\times 2$ matrices. 
%and $3\times 3$ matrices. 
For $2\times 2$ matrices, note that as soon as there exists a multiple edge, a 
2-regular bipartite graph is disconnected. Therefore, an immediate application of 
Proposition~\ref{prop:simple_span} gives the following known result. 
\begin{theorem}\label{thm:2times2}
The matrix semi-invariants of $2\times 2$ matrices are generated by those of 
degree $\leq 4$. 
\end{theorem}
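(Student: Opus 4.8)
The plan is to combine Proposition~\ref{prop:simple_span} with the special combinatorial geometry of $2$-regular bipartite graphs. Recall from Proposition~\ref{prop:deg} that $R(2,m)$ is generated in degree $\leq 2b$ precisely when, for every $d>b$, the space $\posetring(d)\otimes\pair{\posetring}(d)$ is spanned by $\ctabd(d)\cup\ker(\phi)$. So it suffices to take $b=2$, i.e.\ to show that for every $d\geq 3$ (the case $n=2$), the disconnected correlated tableaux together with $\ker(\phi)$ span everything. By Proposition~\ref{prop:simple_span} applied with $n=2$ (note $d\geq n+1=3$ is exactly the range we care about), we already know that $\ctabm(d)\cup\ker(\phi)$ spans $\posetring(d)\otimes\pair{\posetring}(d)$, where $\ctabm(d)$ is spanned by the correlated tableaux whose associated bipartite multigraph is either disconnected or has a multiple edge.

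The key combinatorial observation is that for $n=2$ the "or'' collapses: a $2$-regular bipartite multigraph on $[d]\sqcup[d]$ that contains a multiple (double) edge is automatically disconnected. Indeed, if vertices $i\in L$ and $j\in R$ are joined by a double edge, then since every vertex has degree exactly $2$, both $i$ and $j$ have used up their entire degree on that double edge; hence $\{i,j\}$ forms its own connected component (it is a "$2$-cycle''), and as long as $d\geq 2$ the remaining vertices form a nonempty complementary part, so the graph is disconnected. Therefore for $n=2$ we have $\ctabm(d)=\ctabd(d)$, and Proposition~\ref{prop:simple_span} directly gives that $\ctabd(d)\cup\ker(\phi)$ spans $\posetring(d)\otimes\pair{\posetring}(d)$ for all $d\geq 3$.

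It remains only to package this via the reduction-to-multilinear machinery. Invoking Proposition~\ref{prop:deg} with $n=2$ and $b=2$: the condition "for every $d>2$, $\ctabd(d)\cup\ker(\phi)$ spans $\posetring(d)\otimes\pair{\posetring}(d)$'' is exactly what we just verified, so $R(2,m)$ is generated by $R(2,m)_{\leq 4}$. (One should also note the trivial base cases $d=1,2$: for $d=1$ there is a single vertex on each side joined by a double edge, degree $\leq 2$, nothing to prove; for $d=2$ any $2$-regular bipartite graph is either two disjoint double edges or a single $4$-cycle, and $4$-cycles have degree $4$, within the bound. In fact for the generation statement only $d\geq 3$ matters since lower-degree invariants are automatically admissible.)

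The argument has essentially no obstacle of its own: all the real work is done inside Proposition~\ref{prop:simple_span}. The only point requiring a moment's care is the graph-theoretic claim that multiple edge implies disconnected for $2$-regular bipartite graphs, and the bookkeeping that $d>b=2$ is the same as $d\geq n+1=3$ so that Proposition~\ref{prop:simple_span} is applicable on the whole relevant range. Both are immediate once stated, which is why this theorem is presented as a quick illustration of the strength of Proposition~\ref{prop:simple_span} rather than as a result needing independent effort.
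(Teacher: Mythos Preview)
Your proposal is correct and follows exactly the paper's approach: both rely on the single combinatorial observation that a $2$-regular bipartite graph containing a multiple edge is automatically disconnected, so that $\ctabm(d)=\ctabd(d)$ when $n=2$, and then Proposition~\ref{prop:simple_span} together with Proposition~\ref{prop:deg} finishes immediately. Your write-up is simply more explicit about the bookkeeping (the $d\geq 3=n+1$ match-up and the degree-$2$ reasoning for why a double edge saturates both endpoints), but there is no methodological difference.
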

For $2\times 2$ matrices, Domokos presented an explicit generating set, and from 
this description he deduced that $\beta= 4$ \cite{Domokos00_2}. Therefore our 
bound is tight in this case. Also note that our result is obtained without 
computing a single invariant. 

\paragraph{Acknowledgement.} We would like to thank M{\'a}ty{\'a}s Domokos, Bharat 
Adsul, Ketan Mulmuley, Partha Mukhopadhyay and K. N. Raghavan for discussions 
related to this work. 
Part of the work was done when Youming was visiting the Simons Institute for the 
program Algorithms and Complexity in Algebraic Geometry.

\bibliographystyle{alpha}
\bibliography{references}

%\appendix

%\input{appendix}
%\input{intro-invariant}

\end{document}